\DeclareMathOperator{\Exp}{\mathbb{E}} %
\DeclareMathOperator{\relx}{\mathbf{r}}
\DeclareMathOperator{\click}{\mathbf{c}}
\DeclareMathOperator{\expx}{\mathbf{e}}
\DeclareMathOperator{\prop}{\mathbf{p}}
\DeclareMathOperator{\rank}{rank}
\DeclareMathOperator{\dynrank}{{\mathcal A}}
\DeclareMathOperator{\argmax}{argmax}
\DeclareMathOperator{\Prob}{P}
\newcommand{\x}{\bm{x}}                  %
\newcommand{\Y}{{\mathcal{D}}}           %
\newcommand{\y}{\bm{\sigma}}                  %
\newcommand{\ypart}{d}                   %
\newcommand{\Loss}{U}               %
\newcommand{\Util}{U}                    %
\newcommand{\pol}{{\pi}}                 %
\newcommand{\w}{w}                       %
\newcommand{\Merit}{{Merit}}
\newcommand{\MeritEst}{{\hat{Merit}}}
\newcommand{\Relevance}{{R}}
\newcommand{\RelevanceEst}{\hat{R}}
\newcommand{\RelevanceIPS}{{\hat{R}^{\scalebox{0.6}{IPS}}}} %
\newcommand{\RelevanceReg}[1]{{\hat{R}^{\scalebox{0.7}{#1}}}}  %
\newcommand{\RelevanceRegStar}{{\hat{R}^{\scalebox{0.6}{Reg}}}}  %
\newcommand{\Exposure}{{Exp}}
\newcommand{\Impact}{{Imp}}
\newcommand{\err}{\mathbf{err}}
\newcommand{\argsort}{\operatornamewithlimits{argsort}}
\newcommand{\pmat}{\mathbb{P}}
\newcommand{\Dibar}{\overline{D}}
\newcommand{\DiE}{D^E}
\newcommand{\DiEEst}{\hat{D}^E}
\newcommand{\DiEbar}{\overline{D}^E}
\newcommand{\DiI}{D^I}
\newcommand{\DiIEst}{\hat{D}^I}
\newcommand{\DiIbar}{\overline{D}^I}
\newcommand{\Di}{D}
\newcommand{\DiEst}{\hat{D}}
\newcommand{\polarity}{\rho}
\newcommand{\conname}{{FairCo}}      %
\newcommand{\connameI}{{FairCo(Imp)}} %
\newcommand{\connameE}{{FairCo(Exp)}} %
\newcommand{\ultr}{{D-ULTR}}           %
\newcommand{\ultrglob}{{D-ULTR(Glob)}} %
\newcommand{\naive}{{Naive}}         %
\newcommand{\skylineultr}{{Skyline}} %
\newcommand{\lft}{\text{left}}
\newcommand{\rght}{\text{right}}
\renewcommand{\L}{\mathcal{L}}
  \providecommand\BibTeX{{%
    \normalfont B\kern-0.5em{\scshape i\kern-0.25em b}\kern-0.8em\TeX}}}
\begin{document}
\fancyhead{}
\title{Controlling Fairness and Bias in Dynamic Learning-to-Rank}

\author{Marco Morik}
\authornote{Equal contribution.}
\authornote{Work conducted while at Cornell University.}
\email{m.morik@tu-berlin.de}
\affiliation{
    \institution{Technische Univerit\"at Berlin}
    \streetaddress{Machine Learning Group}
    \city{Berlin}
    \country{Germany}
    \postcode{10587}
}

\author{Ashudeep Singh}
\authornotemark[1]
\email{ashudeep@cs.cornell.edu}
\affiliation{%
  \institution{Cornell University}
  \streetaddress{Department of Computer Science}
  \city{Ithaca}
  \state{NY}
  \postcode{14853}
}

\author{Jessica Hong}
\email{jwh296@cornell.edu}
\affiliation{%
  \institution{Cornell University}
  \streetaddress{Department of Computer Science}
  \city{Ithaca}
  \state{NY}
  \postcode{14853}
}

\author{Thorsten Joachims}
\email{tj@cs.cornell.edu}
\affiliation{%
  \institution{Cornell University}
  \streetaddress{Department of Computer Science}
  \city{Ithaca}
  \state{NY}
  \postcode{14853}
}

\renewcommand{\shortauthors}{Morik et al.}

\begin{abstract}
Rankings are the primary interface through which many online platforms match users to items (e.g. news, products, music, video). In these two-sided markets, not only the users draw utility from the rankings, but the rankings also determine the utility (e.g. exposure, revenue) for the item providers (e.g. publishers, sellers, artists, studios). It has already been noted that myopically optimizing utility to the users -- as done by virtually all learning-to-rank algorithms -- can be unfair to the item providers. We, therefore, present a learning-to-rank approach for explicitly enforcing merit-based fairness guarantees to groups of items (e.g. articles by the same publisher, tracks by the same artist).
In particular, we propose a learning algorithm that ensures notions of amortized group fairness, while simultaneously learning the ranking function from implicit feedback data. The algorithm takes the form of a controller that integrates unbiased estimators for both fairness and utility, dynamically adapting both as more data becomes available. In addition to its rigorous theoretical foundation and convergence guarantees, we find empirically that the algorithm is highly practical and robust. 

\end{abstract}

 \begin{CCSXML}
<ccs2012>
<concept>
<concept_id>10002951.10003317.10003338.10003339</concept_id>
<concept_desc>Information systems~Rank aggregation</concept_desc>
<concept_significance>300</concept_significance>
</concept>
<concept>
<concept_id>10002951.10003317.10003338.10003343</concept_id>
<concept_desc>Information systems~Learning to rank</concept_desc>
<concept_significance>300</concept_significance>
</concept>
<concept>
<concept_id>10002951.10003317.10003359.10003362</concept_id>
<concept_desc>Information systems~Retrieval effectiveness</concept_desc>
<concept_significance>300</concept_significance>
</concept>
<concept>
<concept_id>10002951.10003317.10003338.10003345</concept_id>
<concept_desc>Information systems~Information retrieval diversity</concept_desc>
<concept_significance>100</concept_significance>
</concept>
<concept>
<concept_id>10002951.10003317.10003347.10003350</concept_id>
<concept_desc>Information systems~Recommender systems</concept_desc>
<concept_significance>100</concept_significance>
</concept>
<concept>
<concept_id>10002951.10003317.10003359.10003361</concept_id>
<concept_desc>Information systems~Relevance assessment</concept_desc>
<concept_significance>100</concept_significance>
</concept>
</ccs2012>
\end{CCSXML}

\ccsdesc[300]{Information systems~Learning to rank}

\keywords{ranking; learning-to-rank; fairness; bias; selection bias; exposure}

\copyrightyear{2020}
\acmYear{2020}
\setcopyright{acmlicensed}\acmConference[SIGIR '20]{Proceedings of the 43rd International ACM SIGIR Conference on Research and Development in Information Retrieval}{July 25--30, 2020}{Virtual Event, China}
\acmBooktitle{Proceedings of the 43rd International ACM SIGIR Conference on Research and Development in Information Retrieval (SIGIR '20), July 25--30, 2020, Virtual Event, China}
\acmPrice{15.00}
\acmDOI{10.1145/3397271.3401100}
\acmISBN{978-1-4503-8016-4/20/07}

\maketitle

\section{Introduction}

We consider the problem of dynamic Learning-to-Rank (LTR), where the ranking function dynamically adapts based on the feedback that users provide. Such dynamic LTR problems are ubiquitous in online systems --- news-feed rankings that adapt to the number of "likes" an article receives, online stores that adapt to the number of positive reviews for a product, or movie-recommendation systems that adapt to who has watched a movie. In all of these systems, learning and prediction are dynamically intertwined, where past feedback influences future rankings in a specific form of online learning with partial-information feedback \cite{CesaBianchi/Lugosi/06}. 

While dynamic LTR systems are in widespread use and unquestionably useful, there are at least two issues that require careful design considerations. First, the ranking system induces a bias through the rankings it presents. In particular, items ranked highly are more likely to collect additional feedback, which in turn can influence future rankings and promote misleading rich-get-richer dynamics \cite{adamic2000power,Salganik06,Joachims07a,joachims17unbiased}. Second, the ranking system is the arbiter of how much exposure each item receives, where exposure directly influences opinion (e.g. ideological orientation of presented news articles) or economic gain (e.g. revenue from product sales or streaming) for the provider of the item. This raises fairness considerations about how exposure should be allocated based on the merit of the items \cite{singh2018fairness,biega2018equity}. We will show in the following that naive dynamic LTR methods that are oblivious to these issues can lead to economic disparity, unfairness, and polarization. 

In this paper, we present the first dynamic LTR algorithm -- called \conname\ -- that overcomes rich-get-richer dynamics while enforcing a configurable allocation-of-exposure scheme. Unlike existing fair LTR algorithms \cite{singh2018fairness,biega2018equity,singh2019policy,yadav2019fair}, \conname\ explicitly addresses the dynamic nature of the learning problem, where the system is unbiased and fair even though the relevance and the merit of items are still being learned. At the core of our approach lies a merit-based exposure-allocation criterion that is amortized over the learning process \cite{singh2018fairness,biega2018equity}. We view the enforcement of this merit-based exposure criterion as a control problem and derive a P-controller that optimizes both the fairness of exposure as well as the quality of the rankings. A crucial component of the controller is the ability to estimate merit (i.e. relevance) accurately, even though the feedback is only revealed incrementally as the system operates, and the feedback is biased by the rankings shown in the process \cite{Joachims07a}. To this effect, \conname\ includes a new unbiased cardinal relevance estimator -- as opposed to existing ordinal methods \cite{joachims17unbiased,agarwal19general} --, which can be used both as an unbiased merit estimator for fairness and as a ranking criterion.  

In addition to the theoretical justification of \conname, we provide empirical results on both synthetic news-feed data and real-world movie recommendation data. We find that \conname\ is effective at enforcing fairness while providing good ranking performance. Furthermore, \conname\ is efficient, robust, and easy to implement.

\section{Motivation} \label{sec:motivation}

Consider the following illustrative example of a dynamic LTR problem. An online news-aggregation platform wants to present a ranking of the top news articles on its front page. Through some external mechanism, it identifies a set $\Y=\{\ypart_1,...,\ypart_{20}\}$ of 20 articles at the beginning of each day, but it is left with the learning problem of how to rank these 20 articles on its front page. As users start coming to the platform, the platform uses the following naive algorithm to learn the ranking.

\begin{algorithm}%
Initialize counters $C(\ypart)=0$ for each $\ypart \in \Y$\;
\ForEach{user}{
present ranking $\y=\argsort_\Y[C(\ypart)]$ (random tiebreak)\; 
increment $C(\ypart)$ for the articles read by the user.
}
\caption{Naive Dynamic LTR Algorithm}
\end{algorithm}

Executing this algorithm at the beginning of a day, the platform starts by presenting the 20 articles in random order for the first user. It may then observe that the user reads the article in position 3 and increments the counter $C(\ypart)$ for this article. For the next user, this article now gets ranked first and the counters are updated based on what the second user reads. This cycle continues for each subsequent user. Unfortunately, this naive algorithm has at least two deficiencies that make it suboptimal or unsuitable for many ranking applications. 

The first deficiency lies in the choice of $C(\ypart)$ as an estimate of average relevance for each article -- namely the fraction of users that want to read the article. Unfortunately, even with infinite amounts of user feedback, the counters $C(\ypart)$ are not consistent estimators of average relevance \cite{Salganik06,Joachims07a,joachims17unbiased}. In particular, items that happened to get more reads in early iterations get ranked highly, where more users find them and thus have the opportunity to provide more positive feedback for them. This perpetuates a rich-get-richer dynamic, where the feedback count $C(\ypart)$ recorded for each article does not reflect how many users actually wanted to read the article. 

The second deficiency of the naive algorithm lies in the ranking policy itself, creating a source of unfairness even if the true average relevance of each article was accurately known \cite{singh2018fairness,biega2018equity,trec2019fair}. Consider the following omniscient variant of the naive algorithm that ranks the articles by their true average relevance (i.e. the true fraction of users who want to read each article). How can this ranking be unfair? Let us assume that we have two groups of articles, $G_{\rght}$ and $G_{\lft}$, with 10 items each (i.e. articles from politically right- and left-leaning sources). 51\% of the users (right-leaning) want to read the articles in group $G_{\rght}$, but not the articles in group $G_{\lft}$. In reverse, the remaining 49\% of the users (left-leaning) like only the articles in $G_{\lft}$. Ranking articles solely by their true average relevance puts items from $G_{\rght}$ into positions 1-10 and the items from $G_{\lft}$ in positions 11-20. This means the platform gives the articles in $G_{\lft}$ vastly less exposure than those in $G_{\rght}$. We argue that this can be considered unfair since the two groups receive disproportionately different outcomes despite having similar merit (i.e. relevance). Here, a 2\% difference in average relevance leads to a much larger difference in exposure between the groups.  

We argue that these two deficiencies -- namely bias and unfairness -- are not just undesirable in themselves, but that they have undesirable consequences. For example, biased estimates lead to poor ranking quality, and unfairness is likely to alienate the left-leaning users in our example, driving them off the platform and encouraging polarization.

Furthermore, note that these two deficiencies are not specific to the news example, but that the naive algorithm leads to analogous problems in many other domains. For example, consider a ranking system for job applicants, where rich-get-richer dynamics and exposure allocation may perpetuate and even amplify existing unfairness (e.g. disparity between male and female applicants). Similarly, consider an online marketplace where products of different sellers (i.e. groups) are ranked. Here rich-get-richer dynamics and unfair exposure allocation can encourage monopolies and drive some sellers out of the market.  

These examples illustrate the following two desiderata that a less naive dynamic LTR algorithm should fulfill.
\begin{description}
    \item[Unbiasedness:] The algorithm should not be biased or subject to rich-get-richer dynamics.
    \item[Fairness:] The algorithm should enforce a fair allocation of exposure based on merit (e.g. relevance).
\end{description}

With these two desiderata in mind, this paper develops alternatives to the Naive algorithm. In particular, after introducing the dynamic learning-to-rank setting in Section~\ref{sec:dynltr}, Section~\ref{sec:fairness} formalizes an amortized notion of merit-based fairness, accounting for the fact that merit itself is unknown at the beginning of the learning process and is only learned throughout. Section~\ref{sec:relest} then addresses the bias problem, providing estimators that eliminate the presentation bias for both global and personalized ranking policies. Finally, Section~\ref{sec:faircontrol} proposes a control-based algorithm that is designed to optimize ranking quality while dynamically enforcing fairness.

\section{Related Work} \label{sec:related}

Ranking algorithms are widely recognized for their potential for societal impact \cite{baeza2018bias}, as they form the core of many online systems, including search engines, recommendation systems, news feeds, and online voting. Controlling rich-get-richer phenomena in recommendations and rankings has been studied from the perspective of both optimizing utility through exploration as well as ensuring fairness of such systems \cite{yin2012challenging, schnabel2016recommendations, abdollahpouri2017controlling}. There are several adverse consequences of naive ranking systems \cite{ciampaglia2018algorithmic}, such as political polarization \cite{beam2014automating}, misinformation \cite{vosoughi2018spread}, unfair allocation of exposure \cite{singh2019policy}, and biased judgment \cite{baeza2018bias} through phenomena such as the Matthew effect \cite{adamic2000power,germano2019few}. Viewing such ranking problems as two-sided markets of users and items that each derive utility from the ranking system brings a novel perspective to tackling such problems \cite{singh2018fairness, abdollahpouri2019beyond}. In this work, we take inspiration from these works to develop methods for mitigating bias and unfairness in a dynamic setting.

Machine learning methods underlie most ranking algorithms. There has been a growing concern around the question of how machine learning algorithms can be unfair, especially given their numerous real-world applications \cite{barocas2016big}. There have been several definitions proposed for fairness in the binary classification setting \cite{barocasfairness}, as well as recently in the domains of rankings in recommendations and information retrieval \cite{celis2017ranking,singh2018fairness,biega2018equity,beutel2019fairness}. The definitions of fairness in ranking span from ones purely based on the composition of the top-k \cite{celis2017ranking}, to relevance-based definitions such as fairness of exposure \cite{singh2018fairness}, and amortized attention equity \cite{biega2018equity}. We will discuss these definitions in greater detail in Section~\ref{sec:fairness}. Our work also relates to the recent interest in studying the impact of fairness when learning algorithms are applied in dynamic settings \cite{liu2018delayed,ensign2018runaway,tabibian2019consequential}. 

In information retrieval, there has been a long-standing interest in learning to rank from biased click data. As already argued above, the bias in logged click data occurs because the feedback is incomplete and biased by the presentation. Numerous approaches based on preferences (e.g. \cite{Herbrich:1202,Joachims/02c}), click models (e.g. \cite{chuklin2015click}), and randomized interventions (e.g. \cite{radlinski06minimally}) exist. Most recently, a new approach for de-biasing feedback data using techniques from causal inference and missing data analysis was proposed to provably eliminate selection biases \cite{joachims17unbiased, ai2018unbiased}. We follow this approach in this paper, extend it to the dynamic ranking setting, and propose a new unbiased regression objective in Section~\ref{sec:relest}.

Learning in our dynamic ranking setting is related to the conventional learning-to-rank algorithms such as LambdaRank, LambdaMART, RankNet, Softrank etc. \cite{burges2010ranknet,taylor2008softrank}. However, to implement fairness constraints based on merit, we need to explicitly estimate relevance to the user as a measure of merit while the scores estimated by these methods don't necessarily have a meaning. Our setting is also closely related to online learning to rank for top-k ranking where feedback is observed only on the top-k items, and hence exploration interventions are necessary to ensure convergence \cite{Radlinski08learning, hofmann2013balancing, zoghi2017online, li2018online}. These algorithms are designed with respect to a click-model assumption \cite{zoghi2017online} or learning in the presence of document features \cite{li2018online}. A key difference in our method is that we do not consider exploration through explicit interventions, but merely exploit user-driven exploration. However, explicit exploration could also be incorporated into our algorithms to improve the convergence rate of our methods.

\section{Dynamic Learning-to-Rank} \label{sec:dynltr}

We begin by formally defining the dynamic LTR problem. Given is a set of items $\Y$ that needs to be ranked in response to incoming requests. At each time step $t$, a request 
\begin{equation}
    \x_t,\relx_t \sim \Prob(\x,\relx)
\end{equation}
arrives i.i.d.\ at the ranking system. Each request consists of a feature vector describing the user's information need $\x_t$ (e.g. query, user profile), and the user's vector of true relevance ratings $\relx_t$ for all items in the collection $\Y$. Only the feature vector $\x_t$ is visible to the system, while the true relevance ratings $\relx_t$ are hidden. Based on the information in $\x_t$, a ranking policy $\pol_t(\x)$ produces a ranking $\y_t$ that is presented to the user. Note that the policy may ignore the information in $\x_t$, if we want to learn a single global ranking like in the introductory news example. 

After presenting the ranking $\y_t$, the system receives a feedback vector $\click_t$ from the user with a non-negative value $\click_t(\ypart)$ for every $\ypart \in \Y$. In the simplest case, it is $1$ for click and $0$ for no click, and we will use the word "click" as a placeholder throughout this paper for simplicity. But the feedback may take many other forms and does not have to be binary. For example, in a video streaming service, the feedback may be the percentage the user watched of each video.

After the feedback $\click_t$ was received, the dynamic LTR algorithm $\dynrank$ now updates the ranking policy and produces the policy $\pol_{t+1}$ that is used in the next time step. 
\[
    \pol_{t+1} \longleftarrow \dynrank((\x_1,\y_1,\click_1),...,(\x_t,\y_t,\click_t)) 
\]
An instance of such a dynamic LTR algorithm is the Naive algorithm already outlined in Section~\ref{sec:motivation}. It merely computes $\sum \click_t$ to produce a new ranking policy for $t+1$ (here a global ranking independent of $\x$). 

\subsection{Partial and Biased Feedback}

A key challenge of dynamic LTR lies in the fact that the feedback $\click_t$ provides meaningful feedback only for the items that the user examined. Following a large body of work on click models \cite{chuklin2015click}, we model this as a censoring process. Specifically, for a binary vector $\expx_t$ indicating which items were examined by the user, we model the relationship between $\click_t$ and $\relx_t$ as follows.
\begin{equation}
    \click_t(\ypart) = 
    \left\{
        \begin{matrix}
        \relx_t(\ypart) & \mbox{if $\expx_t(\ypart)=1$}\\
        0               & \mbox{otherwise} 
        \end{matrix}
    \right. 
\end{equation}
Coming back to the running example of news ranking, $\relx_t$ contains the full information about which articles the user is interested in reading, while $\click_t$ reveals this information only for the articles $\ypart$ examined by the user (i.e. $\expx_t(\ypart)=1$). Analogously, in the job placement application $\relx_t$ indicates for all candidates $\ypart$ whether they are qualified to receive an interview call, but $\click_t$ reveals this information only for those candidates examined by the employer. 

A second challenge lies in the fact that the examination vector $\expx_t$ cannot be observed. This implies that a feedback value of $\click_t(\ypart)=0$ is ambiguous -- it may either indicate lack of examination (i.e. $\expx_t(\ypart)=0$) or negative feedback (i.e. $\relx_t(\ypart)=0$). This would not be problematic if $\expx_t$ was uniformly random, but which items get examined is strongly biased by the ranking $\y_t$ presented to the user in the current iteration. Specifically, users are more likely to look at an item high in the ranking than at one that is lower down \cite{Joachims07a}. We model this position bias as a probability distribution on the examination vector
\begin{equation}
    \expx_t \sim \Prob(\expx | \y_t,\x_t,\relx_t). 
\end{equation}
Most click models can be brought into this form \cite{chuklin2015click}. For the simplicity of this paper, we merely use the Position-Based Model (PBM) \cite{craswell2008experimental}. It assumes that the marginal probability of examination $\prop_t(\ypart)$ for each item $\ypart$ depends only on the rank $\rank(\ypart|\y)$ of $\ypart$ in the presented ranking $\y$. Despite its simplicity, it was found that the PBM can capture the main effect of position bias accurately enough to be reliable in practice \cite{joachims17unbiased,wang2018position,agarwal19estimating}. 

\subsection{Evaluating Ranking Performance}

We measure the quality of a ranking policy $\pol$ by its utility to the users. Virtually all ranking metrics used in information retrieval define the utility $\Loss(\y|\relx)$ of a ranking $\y$ as a function of the relevances of the individual items $\relx$. In our case, these item-based relevances $\relx$ represent which articles the user likes to read, or which candidates are qualified for an interview. A commonly used utility measure is the DCG \cite{jarvelin2002cumulated}
\[
    \Loss^{DCG}(\y|\relx) =  \sum_{\ypart \in \y}  \frac{\relx(\ypart)}{\log_2(1+\rank(\ypart|\y))},
\]
or the NDCG when normalized by the DCG of the optimal ranking. Over a distribution of requests $\Prob(\x,\relx)$, a ranking policy $\pol(\x)$ is evaluated by its expected utility
\begin{eqnarray}
    \Util(\pol) & = & \int \Loss(\pol(\x)|\relx) \: d \Prob(\x,\relx). \label{eq:utilfullinfo}
\end{eqnarray}

\subsection{Optimizing Ranking Performance} \label{sec:sortrank}

The user-facing goal of dynamic LTR is to converge to the policy $\pol^*=\argmax_\pol \Util(\pol)$ that maximizes utility. Even if we solve the problem of estimating $\Util(\pol)$ despite our lack of knowledge of $\expx$, this maximization problem could be computationally challenging, since the space of ranking policies is exponential even when learning just a single global ranking. Fortunately, it is easy to show \cite{robertson1977probability} that sorting-based policies 
\begin{eqnarray}
    \pol(\x)  \equiv  \argsort_{\ypart \in \Y} \big[ \Relevance(\ypart|\x) \big],  \label{eq:sortrank}
\end{eqnarray}
where
\begin{eqnarray}
     \Relevance(\ypart|\x)=\int \relx(\ypart) \: d \Prob(\relx|\x),   \label{eq:condrel}
\end{eqnarray}
are optimal for virtually all $\Loss(\y|\relx)$ commonly used in IR (e.g. DCG). So, the problem lies in estimating the expected relevance $\Relevance(\ypart|\x)$ of each item $\ypart$ conditioned on $\x$. When learning a single global ranking, this further simplifies to estimating the expected average relevance  $\Relevance(\ypart) = \int \relx(\ypart) \: d \Prob(\relx,\x)$ for each item $\ypart$. The global ranking can then be derived via
\begin{eqnarray}
    \y  = \argsort_{\ypart \in \Y} \big[ \Relevance(\ypart) \big]   \label{eq:sortrankglob}
\end{eqnarray}
In Section~\ref{sec:relest}, we will use techniques from causal inference and missing-data analysis to design unbiased and consistent estimators for $\Relevance(\ypart|\x)$ and $\Relevance(\ypart)$ that only require access to the observed feedback $\click_t$.

\section{Fairness in Dynamic LTR} \label{sec:fairness}

While sorting by $\Relevance(\ypart|\x)$ (or $\Relevance(\ypart)$ for global rankings) may provide optimal utility to the user, the introductory example has already illustrated that this ranking can be unfair. There is a growing body of literature to address this unfairness in ranking, and we now extend merit-based fairness \cite{singh2018fairness,biega2018equity} to the dynamic LTR setting.

The key scarce resource that a ranking policy allocates among the items is exposure. Based on the model introduced in the previous section, we define the exposure of an item $\ypart$ as the marginal probability of examination $\prop_t(\ypart)=\Prob(\expx_t(\ypart)=1 | \y_t,\x_t,\relx_t)$. It is the probability that the user will see $\ypart$ and thus have the opportunity to read that article, buy that product, or interview that candidate. We discuss in Section~\ref{sec:relest} how to estimate $\prop_t(\ypart)$.
Taking a group-based approach to fairness, we aggregate exposure by groups $\mathcal{G} = \{G_1,\dots,G_m\}$.
\begin{eqnarray}
    \Exposure_t(G_i) & = & \frac{1}{|G_i|}\sum_{\ypart \in G_i} \prop_t(\ypart). \label{eq:expg}
\end{eqnarray}
These groups can be legally protected groups (e.g. gender, race), reflect some other structure (e.g. items sold by a particular seller), or simply put each item in its own group (i.e. individual fairness). 

In order to formulate fairness criteria that relate exposure to merit, we define the merit of an item as its expected average relevance $\Relevance(\ypart)$ and again aggregate over groups.
\begin{eqnarray}
    \Merit(G_i) = \frac{1}{|G_i|}\sum_{\ypart \in G_i} \Relevance(\ypart) \label{eq:merit}
\end{eqnarray}
In Section~\ref{sec:relest}, we will discuss how to get unbiased estimates of $\Merit(G_i)$ using the biased feedback data $\click_t$. 

With these definitions in hand, we can address the types of disparities identified in Section~\ref{sec:motivation}. Specifically, we extend the Disparity of Treatment criterion of \cite{singh2018fairness} to the dynamic ranking problem, using an amortized notion of fairness as in \cite{biega2018equity}. In particular, for any two groups $G_i$ and $G_j$ the disparity
\begin{eqnarray}
    \DiE_{\tau}(G_i,G_j) = \frac{\frac{1}{\tau}\sum_{t=1}^\tau \Exposure_t(G_i)}{ \Merit(G_i) } - \frac{\frac{1}{\tau}\sum_{t=1}^\tau \Exposure_t(G_j)}{\Merit(G_j)} \label{eq:fairexposure}
\end{eqnarray}
measures in how far amortized exposure over $\tau$ time steps was fulfilled.
This {\bf exposure-based fairness disparity} expresses in how far, averaged over all time steps, each group of items got exposure proportional to its relevance. The further the disparity is from zero, the greater is the violation of fairness. Note that other allocation strategies beyond proportionality could be implemented as well by using alternate definitions of disparity \cite{singh2018fairness}. 

Exposure can also be allocated based on other fairness criteria, for example, a Disparity of Impact that a specific exposure allocation implies \cite{singh2018fairness}. If we consider the feedback $\click_t$ (e.g. clicks, purchases, votes) as a measure of impact
\begin{eqnarray}
    \Impact_t(G_i) & = & \frac{1}{|G_i|}\sum_{\ypart \in G_i} \click_t(\ypart), \label{eq:impg}
\end{eqnarray}
then keeping the following disparity close to zero controls how exposure is allocated to make impact proportional to relevance.
\begin{eqnarray}
    \DiI_{\tau}(G_i,G_j) = \frac{\frac{1}{\tau}\sum_{t=1}^\tau \Impact_t(G_i)}{ \Merit(G_i) } - \frac{\frac{1}{\tau}\sum_{t=1}^\tau \Impact_t(G_j)}{\Merit(G_j) } \label{eq:fairimpact}
\end{eqnarray}
We refer to this as the {\bf impact-based fairness disparity}. In Section~\ref{sec:faircontrol} we will derive a controller that drives such exposure and impact disparities to zero.

\section{Unbiased Estimators} \label{sec:relest}

To be able to implement the ranking policies in Equation~\eqref{eq:sortrank} and the fairness disparities in Equations~\eqref{eq:fairexposure} and \eqref{eq:fairimpact}, we need accurate estimates of the position bias $\prop_t$, the expected conditional relevances $\Relevance(\ypart|\x)$, and the expected average relevances $\Relevance(\ypart)$. We consider these estimation problems in the following.

\subsection{Estimating the Position Bias}

Learning a model for $\prop_t$ is not part of our dynamic LTR problem, as the position-bias model is merely an input to our dynamic LTR algorithms. Fortunately, several techniques for estimating position-bias models already exist in the literature \cite{joachims17unbiased,wang2018position,agarwal19estimating,Fang19a}, and we are agnostic to which of these is used. In the simplest case, the examination probabilities $\prop_t(\ypart)$ only depend on the rank of the item in $\y$, analogous to a Position-Based Click Model \cite{craswell2008experimental} with a fixed probability for each rank. It was shown in \cite{joachims17unbiased,wang2018position,agarwal19estimating} how these position-based probabilities can be estimated from explicit and implicit swap interventions. Furthermore, it was shown in \cite{Fang19a} how the contextual features $\x$ about the users and query can be incorporated in a neural-network based propensity model, allowing it to capture that certain users may explore further down the ranking for some queries. Once any of these propensity models are learned, they can be applied to predict $\prop_t$ for any new query $\x_t$ and ranking $\y_t$.

\subsection{Estimating Conditional Relevances} \label{sec:ipscondutil}

The key challenge in estimating $\Relevance(\ypart|\x)$ from Equation~\eqref{eq:condrel} lies in our inability to directly observe the true relevances $\relx_t$. Instead, the only data we have is the partial and biased feedback $\click_t$. To overcome this problem, we take an approach inspired by \cite{joachims17unbiased} and extend it to the dynamic ranking setting. The key idea is to correct for the selection bias with which relevance labels are observed in $\click_t$ using techniques from survey sampling and causal inference \cite{horvitz1952generalization,imbens2015causal}. However, unlike the ordinal estimators proposed in \cite{joachims17unbiased}, we need cardinal relevance estimates since our fairness disparities are cardinal in nature. We, therefore, propose the following cardinal relevance estimator.

The key idea behind this estimator lies in a training objective that only uses $\click_t$, but that in expectation is equivalent to a least-squares objective that has access to $\relx_t$. To start the derivation, let's consider how we would estimate $\Relevance(\ypart|\x)$, if we had access to the relevance labels $(\relx_1, ...,\relx_\tau)$ of the previous $\tau$ time steps. A straightforward solution would be to solve the following least-squares objective for a given regression model $\RelevanceReg{\w}(\ypart|\x_t)$ (e.g. a neural network), where $\w$ are the parameters of the model. 
\begin{eqnarray}
    \L^{\relx}(\w) & = & \sum_{t=1}^\tau \sum_{\ypart} \left(\relx_t(\ypart) - \RelevanceReg{\w}(\ypart|\x_t)\right)^2  \label{eq:regrel}
\end{eqnarray}
The minimum $\w^*$ of this objective is the least-squares regression estimator of $\Relevance(\ypart|\x_t)$. Since the $(\relx_1, ...,\relx_\tau)$ are not available, we define an asymptotically equivalent objective that merely uses the biased feedback $(\click_1, ...,\click_\tau)$. The new objective corrects for the position bias using Inverse Propensity Score (IPS) weighting \cite{horvitz1952generalization,imbens2015causal}, where the position bias $(\prop_1,...,\prop_\tau)$ takes the role of the missingness model. 
\begin{eqnarray}
    \L^{\click}(\w) \;=\; \sum_{t=1}^\tau \sum_{\ypart} \RelevanceReg{\w}(\ypart|\x_t)^2 + \frac{\click_t(\ypart)}{\prop_t(\ypart)}  (\click_t(\ypart)-2 \RelevanceReg{\w}(\ypart|\x_t)) \label{eq:newloss2}
\end{eqnarray}
We denote the regression estimator defined by the minimum of this objective as $\RelevanceRegStar(\ypart|\x_t)$.
The regression objective in \eqref{eq:newloss2} is unbiased, meaning that its expectation is equal to the regression objective $\L^{\relx}(\w)$ that uses the unobserved true relevances $(\relx_1, ...,\relx_\tau)$.
\begin{align*}
    \Exp_{\expx}&\left[\L^{\click}(\w)\right]  \\
    &= \sum_{t=1}^\tau \!\sum_{\ypart} \!\!\sum_{\expx_t(\ypart)} \!\!\left[\RelevanceReg{\w}(\ypart|\x_t)^2 \!+\! \frac{\click_t(\ypart)}{\prop_t(\ypart)}  (\click_t(\ypart)\!-\!2 \RelevanceReg{\w}(\ypart|\x_t\!))\right] \!\Prob(\expx_t(\ypart)|\y_t,\x_t\!) \\
    &= \sum_{t=1}^\tau \sum_{\ypart} \RelevanceReg{\w}(\ypart|\x_t)^2 + \frac{1}{\prop_t(\ypart)}  \relx_t(\ypart)  (\relx_t(\ypart) -2 \RelevanceReg{\w}(\ypart|\x_t)) \prop_t(\ypart) \\
    &= \sum_{t=1}^\tau \sum_{\ypart}  \RelevanceReg{\w}(\ypart|\x_t)^2 + \relx_t(\ypart)^2 - 2 \relx_t(\ypart) \RelevanceReg{\w}(\ypart|\x_t) \\
    &= \sum_{t=1}^\tau \sum_{\ypart}  \left(\relx_t(\ypart) - \RelevanceReg{\w}(\ypart|\x_t)\right)^2 \\
    &= \L^{\relx}(\w)
\end{align*}
Line 2 formulates the expectation in terms of the marginal exposure probabilities $\Prob(\expx_t(\ypart)|\y_t,\x_t)$, which decomposes the expectation as the objective is additive in $\ypart$. Note that $\Prob(\expx_t(\ypart)=1|\y_t,\x_t)$ is therefore equal to $\prop_t(\ypart)$ under our exposure model. Line 3 substitutes $\click_t(\ypart)=\expx_t(\ypart) \relx_t(\ypart)$ and simplifies the expression, since $\expx_t(\ypart) \relx_t(\ypart) = 0$ whenever the user is not exposed to an item. Note that the propensities $\prop_t(\y)$ for the exposed items now cancel, as long as they are bounded away from zero -- meaning that all items have some probability of being found by the user. In case users do not naturally explore low enough in the ranking, active interventions can be used to stochastically promote items in order to ensure non-zero examination propensities (e.g. \cite{hofmann2013balancing}). 
Note that unbiasedness holds for any sequence of $(\x_1,\relx_1,\y_1) ..., (\x_T,\relx_T,\y_T)$, no matter how complex the dependencies between the rankings $\y_t$ are.

Beyond this proof of unbiasedness, it is possible to use standard concentration inequalities to show that $\L^{\click}(\w)$ converges to $\L^{\relx}(\w)$ as the size $\tau$ of the training sequence increases. 
Thus, under standard conditions on the capacity for uniform convergence, it is possible to show convergence of the minimizer of $\L^{\click}(\w)$ to the least-squares regressor as the size $\tau$ of the training sequence increases.  We will use this regression objective to learn neural-network rankers in Section~\ref{sec:expjester}.

\subsection{Estimating Average Relevances}

The conditional relevances $\Relevance(\ypart|\x)$ are used in the ranking policies from Equation~\eqref{eq:sortrank}. But when defining merit in Equation~\eqref{eq:merit} for the fairness disparities, the average relevance $\Relevance(\ypart)$ is needed. Furthermore, $\Relevance(\ypart)$ serves as the ranking criterion for global rankings in Equation~\eqref{eq:sortrankglob}. While we could marginalize $\Relevance(\ypart|\x)$ over $\Prob(\x)$ to derive $\Relevance(\ypart)$, we argue that the following is a more direct way to get an unbiased estimate.
\begin{eqnarray}
    \RelevanceIPS(\ypart) & = & \frac{1}{\tau} \sum_{t=1}^\tau  \frac{\click_t(\ypart)}{\prop_t(\ypart)} \label{eq:relips}.
\end{eqnarray}
The following shows that this estimator is unbiased as long as the propensities are bounded away from zero.
\begin{align*}
    \Exp_{\expx}\left[ \RelevanceIPS(\ypart) \right] &= \frac{1}{\tau} \sum_{t=1}^\tau  \sum_{\expx_t(\ypart)} \frac{\expx_t(\ypart) \relx_t(\ypart)}{\prop_t(\ypart)} \Prob(\expx_t(\ypart)|\y_t,\x_t)\\
    &= \frac{1}{\tau} \sum_{t=1}^\tau \frac{ \relx_t(\ypart)}{\prop_t(\ypart)} \prop_t(\ypart)\\
    &= \frac{1}{\tau} \sum_{t=1}^\tau  \relx_t(\ypart)\\
    &= \Relevance(\ypart)
\end{align*}
In the following experiments, we will use this estimator whenever a direct estimate of $\Relevance(\ypart)$ is needed for the fairness disparities or as a global ranking criterion.

\section{Dynamically Controlling Fairness} \label{sec:faircontrol}

Given the formalization of the dynamic LTR problem, our definition of fairness, and our derivation of estimators for all relevant parameters, we are now in the position to tackle the problem of ranking while enforcing the fairness conditions. We view this as a control problem since we need to be robust to the uncertainty in the estimates $\RelevanceEst(\ypart|\x)$ and $\RelevanceEst(\ypart)$ at the beginning of the learning process. Specifically, we propose a controller that is able to make up for the initial uncertainty as these estimates converge during the learning process. 

Following our pairwise definitions of amortized fairness from Section~\ref{sec:fairness}, we quantify by how much fairness between all classes is violated using the following overall disparity metric.
\begin{equation}
    \Dibar_{\tau}= \frac{2}{m(m-1)} \sum_{i=0}^{m}\sum_{j=i+1}^m \left| \Di_{\tau}(G_i,G_j) \right|  \label{eq:disparityEmean}
\end{equation}
This metric can be instantiated with the disparity $\DiE_{\tau}(G_i,G_j)$ from Equation~\eqref{eq:fairexposure} for exposure-based fairness, or $\DiI_{\tau}(G_i,G_j)$ from Equation~\eqref{eq:fairimpact} for impact-based fairness. Since optimal fairness is achieved for $\Dibar_{\tau}=0$, we seek to minimize $\Dibar_{\tau}$.

To this end, we now derive a method we call {\it \conname}, which takes the form of a Proportional Controller (a.k.a. P-Controller) \cite{bequette2003process}. A P-controller is a widely used control-loop mechanism that applies feedback through a correction term that is proportional to the error. In our application, the error corresponds to the violation of our amortized fairness disparity from Equations~\eqref{eq:fairexposure} and \eqref{eq:fairimpact}. 
Specifically, for any set of disjoint groups $\mathcal{G} = \{G_1,\dots,G_m\}$, the error term of the controller for any item $\ypart$ is defined as
\[\forall G\in\mathcal{G} \: \forall \ypart \in G: \err_\tau(\ypart) =  (\tau-1) \cdot \max_{G_i}\bigg(  \DiEst_{\tau-1}(G_i,G) \bigg). \]
The error term $\err_\tau(G)$ is zero for the group that already has the maximum exposure/impact w.r.t. its merit. For items in the other groups, the error term grows with increasing disparity.

Note that the disparity $\DiEst_{\tau-1}(G_i,G)$ in the error term uses the estimated  $\MeritEst(G)$ from Equation~\eqref{eq:relips}, which converges to $\Merit(G)$ as the sample size $\tau$ increases. To avoid division by zero, $\MeritEst(G)$ can be set to some minimum constant.

We are now in a position to state the \conname~ranking policy as
\begin{eqnarray}
    \textrm{\conname:} \qquad \y_\tau = \argsort_{\ypart \in \Y}\:\left( \RelevanceEst(\ypart|\x) + \lambda\; \err_\tau(\ypart)\right). \label{eq:p-controller}
\end{eqnarray}
When the exposure-based disparity $\DiEEst_{\tau-1}(G_i,G)$ is used in the error term, we refer to this policy as \connameE. If the impact-based disparity $\DiIEst_{\tau-1}(G_i,G)$ is used, we refer to it as \connameI. 

Like the policies in Section~\ref{sec:sortrank}, \conname~is a sort-based policy. However, the sorting criterion is a combination of relevance $\RelevanceEst(\ypart|\x)$ and an error term representing the fairness violation. 
The idea behind \conname~is that the error term pushes the items from the underexposed groups upwards in the ranking. The parameter $\lambda$ can be chosen to be any positive constant. While any choice of $\lambda$ leads to asymptotic convergence as shown by the theorem below for exposure fairness, a suitable choice of $\lambda$ can have influence on the finite-sample behavior of \conname: a higher $\lambda$ can lead to an oscillating behavior, while a smaller $\lambda$ makes the convergence smoother but slower. We explore the role of $\lambda$ in the experiments, but find that keeping it fixed at $\lambda=0.01$ works well across all of our experiments. Another key quality of \conname~is that it is agnostic to the choice of error metric, and we conjecture that it can easily be adapted to other types of fairness disparities. Furthermore, it is easy to implement and it is very efficient, making it well suited for practical applications.

To illustrate the theoretical properties of \conname, we now analyze its convergence for the case of exposure-based fairness. To disentangle the convergence of the estimator for $\MeritEst(G)$ from the convergence of \conname, consider a time point $\tau_0$ where $\MeritEst(G)$ is already close to $\Merit(G)$ for all $G \in \mathcal{G}$. We can thus focus on the question whether \conname~can drive $\DiEbar_{\tau}$ to zero starting from any unfairness that may have persisted at time $\tau_0$. To make this problem well-posed, we need to assume that exposure is not available in overabundance, otherwise it may be unavoidable to give some groups more exposure than they deserve even if they are put at the bottom of the ranking. A sufficient condition for excluding this case is to only consider problems for which the following is true: for all pairs of groups $G_i, G_j$, if $G_i$ is ranked entirely above $G_j$ at any time point $t$, then 
\begin{equation}
    \frac{\Exposure_t(G_i)}{\MeritEst(G_i)} \geq \frac{ \Exposure_t(G_j)}{\MeritEst(G_j)}. \label{eq:non-overabundance}
\end{equation}
Intuitively, the condition states that ranking $G_i$ ahead of $G_j$ reduces the disparity if $G_i$ has been underexposed in the past.
We can now state the following theorem.

\begin{theorem}\label{theorem:1}
For any set of disjoint groups $\mathcal{G} = \{G_1,\dots,G_m\}$ with any fixed target merits $\MeritEst(G_i)>0$ that fulfill \eqref{eq:non-overabundance}, any relevance model $\RelevanceEst(\ypart|\x) \in [0,1]$, any exposure model $\prop_t(\ypart)$ with $0 \leq \prop_t(\ypart) \leq \prop_{\max}$, and any value $\lambda>0$, running \connameE~from time $\tau_0$ will always ensure that the overall disparity $\Dibar^E_\tau$ with respect to the target merits converges to zero at a rate of $\mathcal{O}\left(\frac{1}{\tau}\right)$, no matter how unfair the exposures $\frac{1}{\tau_0}\sum_{t=1}^{\tau_0} \Exposure_t(G_j)$ up to $\tau_0$ have been.
\end{theorem}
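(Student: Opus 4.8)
The plan is to track the cumulative (un-normalized) exposure surplus of each group relative to its fair share, and show that the P-controller acts to shrink the worst surplus whenever it is positive. Define, for each group $G$, the quantity $S_\tau(G) = \sum_{t=1}^\tau \Exposure_t(G) - \MeritEst(G)\cdot c_\tau$, where $c_\tau$ is chosen so that $\sum_i |G_i| S_\tau(G_i)$-type bookkeeping is consistent; more directly, I would work with the pairwise disparities $\DiE_\tau(G_i,G_j)$ and note that $\Dibar^E_\tau = \tfrac{2}{m(m-1)}\sum_{i<j}|\DiE_\tau(G_i,G_j)|$ is controlled by the maximum over $i,j$ of $|\DiE_\tau(G_i,G_j)|$ up to a factor depending only on $m$. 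Since $\DiE_\tau(G_i,G_j)$ is a running average, it suffices to show that the un-normalized discrepancy $\tau\cdot\DiE_\tau(G_i,G_j)$ stays bounded by a constant independent of $\tau$; dividing by $\tau$ then gives the $\mathcal{O}(1/\tau)$ rate.

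The key step is a one-step drift argument. At time $\tau$, the error term $\err_\tau(\ypart) = (\tau-1)\max_{G_i}\DiEst_{\tau-1}(G_i,G)$ is largest (and positive) precisely for items in the \emph{most underexposed} group $G^*$, and is $0$ for the currently most overexposed group. Since $\RelevanceEst(\ypart|\x)\in[0,1]$ is bounded while $\lambda\,\err_\tau$ grows linearly in $\tau$ once a disparity of fixed size persists, for $\tau$ large enough the sorting in \eqref{eq:p-controller} ranks the underexposed groups strictly above the overexposed ones — in particular $G^*$ is ranked entirely above the group $G$ achieving the max in its error term. By the non-overabundance condition \eqref{eq:non-overabundance}, ranking $G^*$ above $G$ gives $\Exposure_\tau(G^*)/\MeritEst(G^*)\geq \Exposure_\tau(G)/\MeritEst(G)$, i.e. this time step contributes a \emph{non-positive} increment to the disparity $\DiE_\tau(G,G^*)$ that was maximal. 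More carefully: whenever $\tau\cdot\DiEbar^E_\tau$ exceeds a threshold of order $\prop_{\max}/(\lambda\,\Merit_{\min})$ (so that the relevance scores cannot reorder the groups away from the fairness-dictated order), the controller decreases the maximal un-normalized disparity, or at worst keeps it from growing by more than a bounded amount $\prop_{\max}/\Merit_{\min}$ per step; combined, these show $\tau\cdot\DiE_\tau(G_i,G_j)$ is bounded by a constant for all $\tau\geq\tau_0$.

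Concretely, I would carry this out as: (i) reduce $\Dibar^E_\tau$ to $\max_{i,j}|\DiE_\tau(G_i,G_j)|$ and further to the un-normalized $M_\tau := \tau\max_{i,j}\DiE_\tau(G_i,G_j)$ (using that by symmetry of the index set the max of the signed disparity controls the max of its absolute value); (ii) write the one-step update $M_\tau \leq M_{\tau-1} + (\Exposure_\tau(G_{\arg\max}) - \MeritEst(G_{\arg\max})\cdot\overline{\Exposure}_\tau)$ where the bracketed term is the per-step change of the numerator minus its fair share; (iii) show that once $M_{\tau-1}$ is above a constant threshold $\theta := \prop_{\max}/(\lambda\,\min_i\MeritEst(G_i))$, the error term dominates the bounded relevance scores in \eqref{eq:p-controller}, so the maximally-underexposed group is ranked above the group it is compared against, making the bracketed term $\le 0$ by \eqref{eq:non-overabundance}; (iv) conclude that $M_\tau \leq \max(\theta, M_{\tau_0}) + \prop_{\max}/\min_i\MeritEst(G_i) =: C$ for all $\tau$, hence $\Dibar^E_\tau \leq 2C/((m-1)\tau) = \mathcal{O}(1/\tau)$.

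The main obstacle is step (iii): making rigorous the claim that ``$\err_\tau$ dominates $\RelevanceEst$'' in the \emph{argsort}, i.e. that a positive un-normalized disparity of size $\gtrsim \theta$ actually forces the correct \emph{group-level} ordering rather than merely perturbing individual items. One has to be careful that the error term is constant \emph{within} a group but that different underexposed groups get different error terms, and to verify that the resulting ordering is still one to which \eqref{eq:non-overabundance} applies (a total order on groups consistent with $\DiEst_{\tau-1}(\cdot,G)$), and that the relevance perturbation cannot create an interleaving that violates the monotonicity needed for \eqref{eq:non-overabundance}. A secondary subtlety is handling ties and the case where several groups share the maximal disparity, and tracking the constant from the initial condition at $\tau_0$ so that the bound is genuinely uniform in the initial unfairness (which enters only through $M_{\tau_0}$, a fixed finite number).
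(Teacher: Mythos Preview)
Your strategy is essentially the paper's: show that each un-normalized pairwise disparity $\tau\,\DiE_\tau(G_i,G_j)$ stays bounded by a constant via a two-case drift argument (non-positive increment when the disparity is large, bounded increment otherwise), then divide by $\tau$. Two refinements clean up the details and dissolve the obstacle you flag in step (iii).

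First, work pairwise with a fixed $(G_i,G_j)$ rather than tracking the global maximum $M_\tau$. The key identity you are missing is that for $d\in G_j$ and $d'\in G_i$ the difference of error terms is \emph{exactly}
\[
\err_\tau(d)-\err_\tau(d') \;=\; (\tau-1)\,\DiE_{\tau-1}(G_i,G_j),
\]
because both maxima $\max_{G'}\DiE_{\tau-1}(G',\,\cdot)$ are attained at the same most-overexposed group $G^*$ and the common term $\overline{\Exposure}(G^*)/\MeritEst(G^*)$ cancels. Hence once $(\tau-1)\DiE_{\tau-1}(G_i,G_j)>1/\lambda$, the gap $\lambda\bigl(\err_\tau(d)-\err_\tau(d')\bigr)>1$ overrides any relevance difference in $[0,1]$, so all of $G_j$ is placed above all of $G_i$ regardless of where other groups fall. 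Condition~\eqref{eq:non-overabundance} is itself pairwise and applies directly in this situation; no total group order or interleaving analysis is needed. This also gives the sharp threshold $1/\lambda$ rather than your $\prop_{\max}/(\lambda\,\Merit_{\min})$.

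Second, your one-step recursion in (ii) should read
\[
\tau\,\DiE_\tau(G_i,G_j)\;=\;(\tau-1)\,\DiE_{\tau-1}(G_i,G_j)\;+\;\Bigl(\tfrac{\Exposure_\tau(G_i)}{\MeritEst(G_i)}-\tfrac{\Exposure_\tau(G_j)}{\MeritEst(G_j)}\Bigr),
\]
not the ``fair-share'' form you wrote. Bounding the bracket by $0$ in the large-disparity case and by a universal constant $\Delta=\max_\sigma\max_{G\neq G'}\bigl(\tfrac{\Exposure_\sigma(G)}{\MeritEst(G)}-\tfrac{\Exposure_\sigma(G')}{\MeritEst(G')}\bigr)$ in the small-disparity case yields, by a two-case induction from $\tau_0$,
\[
\tau\,\DiE_\tau(G_i,G_j)\;\le\;\max\Bigl(\tau_0\,\bigl|\DiE_{\tau_0}(G_i,G_j)\bigr|,\;\tfrac{1}{\lambda}+\Delta\Bigr),
\]
and summing over pairs gives the $\mathcal{O}(1/\tau)$ bound on $\DiEbar_\tau$.
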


The proof of the theorem is included in Appendix~\ref{sec:theoremproof}. Note that this theorem holds for any time point $\tau_0$, even if the estimated merits change substantially up to $\tau_0$. So, once the estimated merits have converged to the true merits, \connameE~will ensure that the amortized disparity $\DiEbar_{\tau}$ converges to zero as well.

\section{Empirical Evaluation} \label{sec:exp}

In addition to the theoretical justification of our approach, we also conducted an empirical evaluation\footnote{The implementation is available at \url{https://github.com/MarcoMorik/Dynamic-Fairness}.}.
We first present experiments on a semi-synthetic news dataset to investigate different aspects of the proposed methods under controlled conditions. After that we evaluate the methods on real-world movie preference data for external validity.

\subsection{Robustness Analysis on News Data} \label{sec:expsim}

To be able to evaluate the methods in a variety of specifically designed test settings, we created the following simulation environment from articles in the Ad Fontes Media Bias dataset\footnote{https://www.adfontesmedia.com/interactive-media-bias-chart/}. It simulates a dynamic ranking problem on a set of news articles belonging to two groups $G_\lft$ and $G_\rght$ (e.g. left-leaning and right-leaning news articles).

In each trial, we sample a set of 30 news articles $\Y$. For each article, the dataset contains a polarity value $\polarity^\ypart$ that we rescale to the interval between -1 and 1, while the user polarities are simulated. Each user has a polarity that is drawn from a mixture of two normal distributions clipped to $[-1, 1]$
\begin{equation}
    \polarity^{u_t} \sim \text{clip}_{[-1,1]}\big( p_{neg} \mathcal{N}(-0.5,0.2) + (1-p_{neg}) \mathcal{N}(0.5,0.2)\big) \label{eq:polarity_user}
\end{equation}
where $p_{neg}$ is the probability of the user to be left-leaning (mean=$-0.5$). We use $p_{neg}=0.5$ unless specified. In addition, each user has an openness parameter $o^{u_t} \sim \mathcal{U}(0.05,0.55)$, indicating on the breadth of interest outside their polarity.
Based on the polarities of the user $u_t$ and the item $\ypart$, the true relevance is drawn from the Bernoulli distribution 
\[\relx_t(\ypart) \sim \textrm{Bernoulli}\left[p=exp\left(\frac{-(\polarity^{u_t}-\polarity^{\ypart})^2}{2(o^{u_t})^2}\right)\right].\]

As the model of user behavior, we use the Position-based click model (PBM \cite{chuklin2015click}), where the marginal probability that user $u_t$ examines an article only depends only on its position. We choose an exposure drop-off analogous to the gain function in DCG as 
\begin{equation}
\prop_t(\ypart) = \frac{1}{\log_2(\rank(\ypart|\y_t)+1)}. \label{eq:propensitycurve}
\end{equation}

The remainder of the simulation follows the dynamic ranking setup. At each time step $t$ a user $u_t$ arrives to the system, the algorithm presents an unpersonalized ranking $\y_t$, and the user provides feedback $\click_t$ according to $\prop_t$ and $\relx_t$. The algorithm only observes $\click_t$ and not $\relx_t$.

To investigate group-fairness, we group the items according to their polarity, where items with a polarity $\polarity^{\ypart} \in [-1,0)$ belong to the \emph{left-leaning} group $G_{\lft}$ and items with a polarity $\polarity^{\ypart} \in [0,1]$ belong to the \emph{right-leaning} group $G_{\rght}$. 

We measure ranking quality by the average cumulative NDCG $\frac{1}{\tau}\sum_{t=1}^\tau \Loss^{DCG}(\y_t|\relx_t)$ over all the users up to time $\tau$. We measure Exposure Unfairness via $\DiEbar_{\tau}$ and Impact Unfairness via $\DiIbar_{\tau}$ as defined in Equation~\eqref{eq:disparityEmean}.

In all news experiments, we learn a global ranking and compare the following methods. 
\begin{description}
    \item[\naive:] Rank by the sum of the observed feedback $\click_t$. 
    \item[\ultrglob:] Dynamic LTR by sorting via the unbiased estimates $\RelevanceIPS(\ypart)$ from Eq.~\eqref{eq:relips}.
    \item[\connameI:] Fairness controller from Eq.~\eqref{eq:p-controller} for impact fairness.
\end{description}

\subsubsection{Can \conname\ reduce unfairness while maintaining good ranking quality?}

\begin{figure}[h!]
    \centering
    \includegraphics[width=\linewidth]{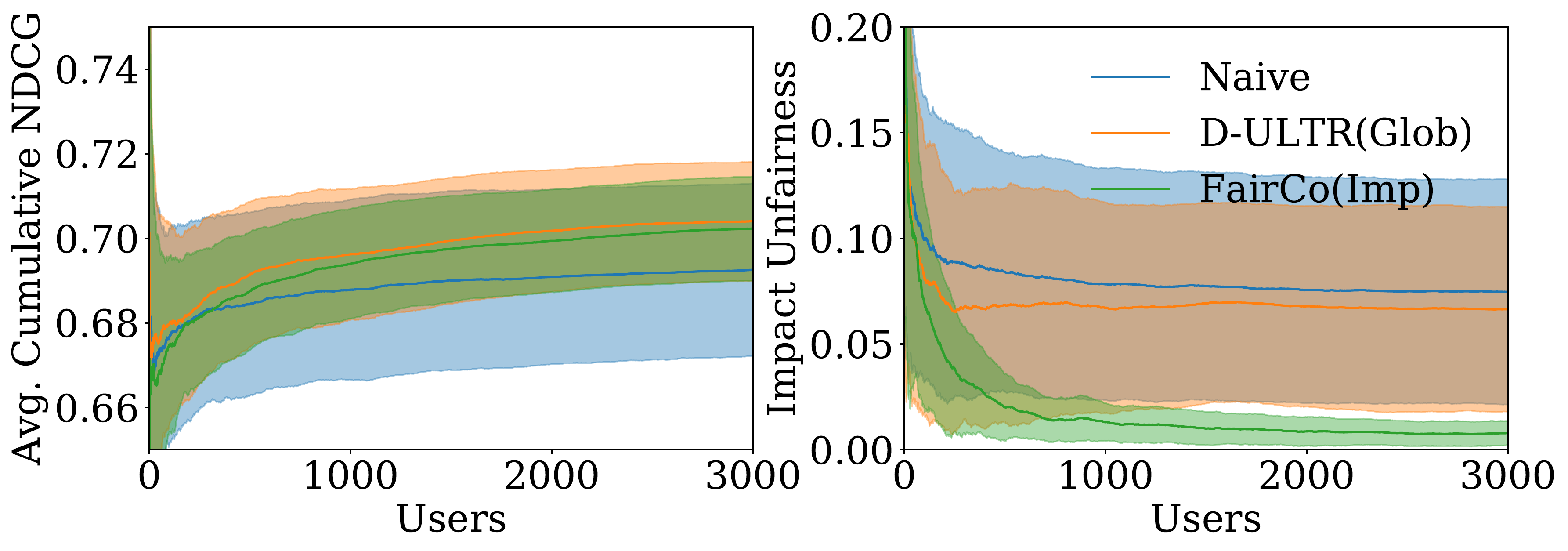}
    \caption{Convergence of NDCG (left) and Unfairness (right) as the number of users increases.
    (100 trials) } 
    \label{fig:syntheticImpact}
\end{figure}
This is the key question in evaluating \conname, and Figure~\ref{fig:syntheticImpact} shows how NDCG and Unfairness converge for \naive, \ultrglob, and \connameI. The plots show that \naive\ achieves the lowest NDCG and that its unfairness remains high as the number of user interactions increases. \ultrglob\ achieve the best NDCG, as predicted by the theory, but its unfairness is only marginally better than that of \naive. Only \conname\ manages to substantially reduce unfairness, and this comes only at a small decrease in NDCG compared to \ultrglob.

The following questions will provide further insight into these results, evaluating the components of the \conname\ and exploring its robustness.

\subsubsection{Do the unbiased estimates converge to the true relevances?}

\begin{figure}[h]
    \centering
    \includegraphics[width=\linewidth]{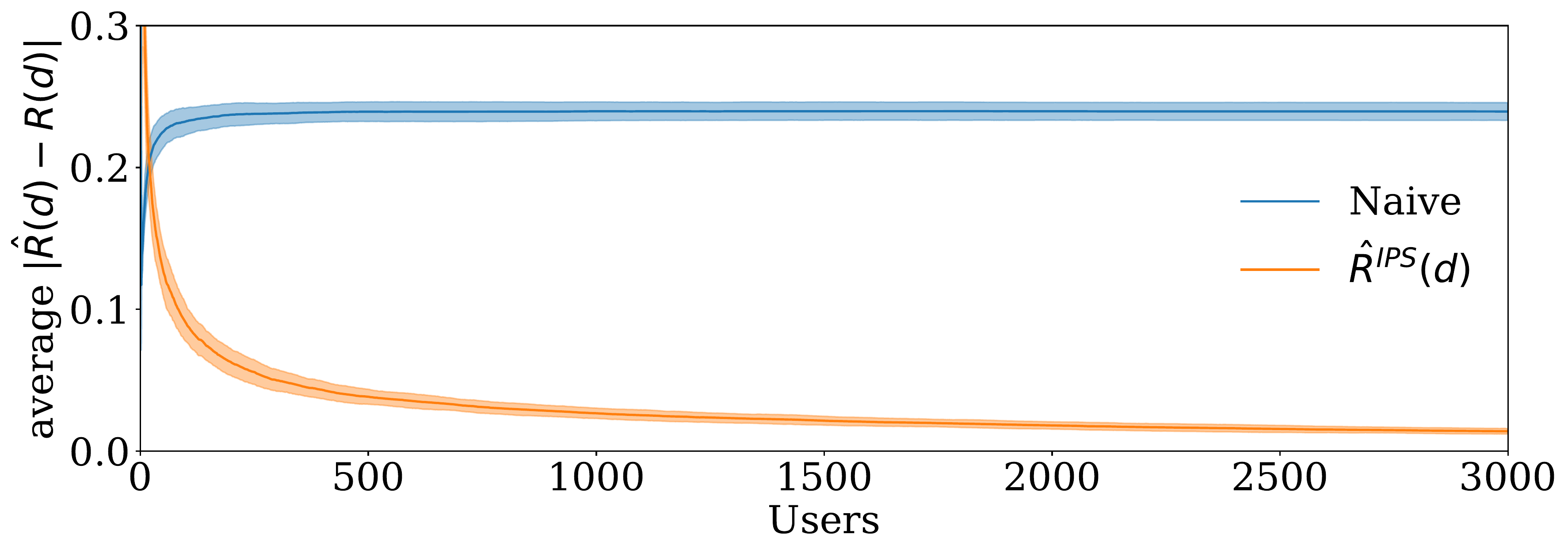}
    \caption{Error of relevance estimators as the number of users increases ($|\Y|=30$, 10 trials)}
    \label{fig:relevanceConvergence}
\end{figure}

The first component of \conname\ we evaluate is the unbiased IPS estimator $\RelevanceIPS(\ypart)$ from Equation~\eqref{eq:relips}. %
Figure 1 shows the absolute difference between the estimated global relevance and true global relevance for $\RelevanceIPS(\ypart)$ and the estimator used in the \naive. While the error for \naive~stagnates at around 0.25, the estimation error of $\RelevanceIPS(\ypart)$ approaches zero as the number of users increases. This verifies that IPS eliminates the effect of position bias and learns accurate estimates of the true expected relevance for each news article so that we can use them for the fairness and ranking criteria. %

\subsubsection{Does \conname\ overcome the rich-get-richer dynamic?}

\begin{figure}[h]
    \centering
    \includegraphics[width=\linewidth]{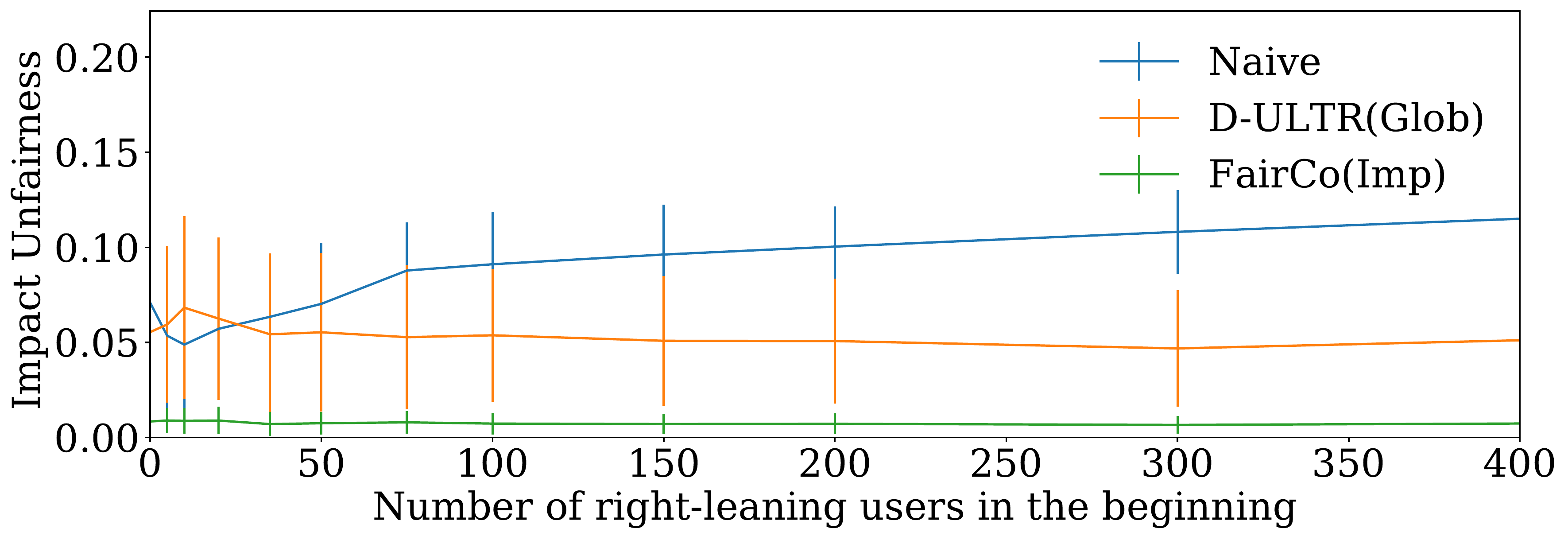}
    \caption{The effect of a block of right-leaning users on the Unfairness of Impact. (50 trials, 3000 users)}
    \label{fig:richgetricher}
\end{figure}

The illustrating example in Section~\ref{sec:motivation} argues that naively ranking items is highly sensitive to the initial conditions (e.g. which items get the first clicks), leading to a rich-get-richer dynamic. We now test whether \conname\ overcomes this problem. In particular, we adversarially modify the user distribution so that the first $x$ users are right-leaning ($p_{neg}=0$), followed by $x$ left-leaning users ($p_{neg}=1$), before we continue with a balanced user distribution ($p_{neg}=0.5$). Figure~\ref{fig:richgetricher} shows the unfairness after 3000 user interactions. As expected, \naive~is the most sensitive to the head-start that the right-leaning articles are getting. \ultrglob~fares better and its unfairness remains constant (but high) independent of the initial user distribution since the unbiased estimator $\RelevanceIPS(\ypart)$ corrects for the presentation bias so that the estimates still converge to the true relevance. \conname\ inherits this robustness to initial conditions since it uses the same $\RelevanceIPS(\ypart)$ estimator, and its active control for unfairness makes it the only method that achieves low unfairness across the whole range.

\subsubsection{How effective is the \conname\ compared to a more expensive Linear-Programming Baseline?}
\label{sec:fairness-cont-exp}
\begin{figure}[h!]
    \centering
    \includegraphics[width=\linewidth]{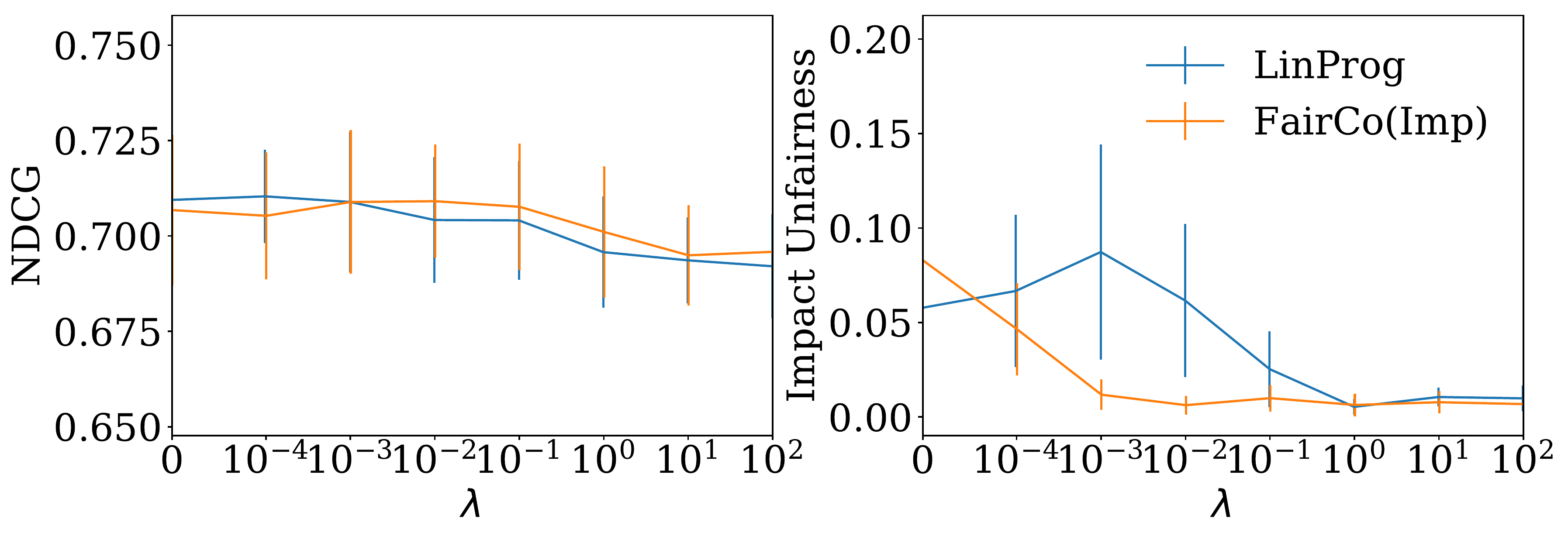}
    \caption{Comparing the LP Baseline and the P-Controller in terms of NDCG (left) and Unfairness (right) for different values of $\lambda$. (15 trials, 3000 users) }
    \label{fig:LPvsController}
\end{figure}
As a baseline, we adapt the linear programming method from \cite{singh2018fairness} to the dynamic LTR setting to minimize the amortized fairness disparities that we consider in this work. The method uses the current relevance and disparity estimates to solve a linear programming problem whose solution is a stochastic ranking policy that satisfies the fairness constraints in expectation at each $\tau$. The details of this method are described in Appendix~\ref{sec:fairlp}.
Figure \ref{fig:LPvsController} shows NDCG and Impact Unfairness after 3000 users averaged over 15 trials for both LinProg and \conname\ for different values of their hyperparameter $\lambda$. For $\lambda=0$, both methods reduce to \ultrglob\ and we can see that their solutions are unfair. As $\lambda$ increases, both methods start enforcing fairness at the expense of NDCG. In these and other experiments, we found no evidence that the LinProg baseline is superior to \conname. However, LinProg is substantially more expensive to compute, which makes \conname\ preferable in practice. %

\subsubsection{Is \conname\ effective for different group sizes?}

\begin{figure}[h!]
    \centering
    \includegraphics[width=\linewidth]{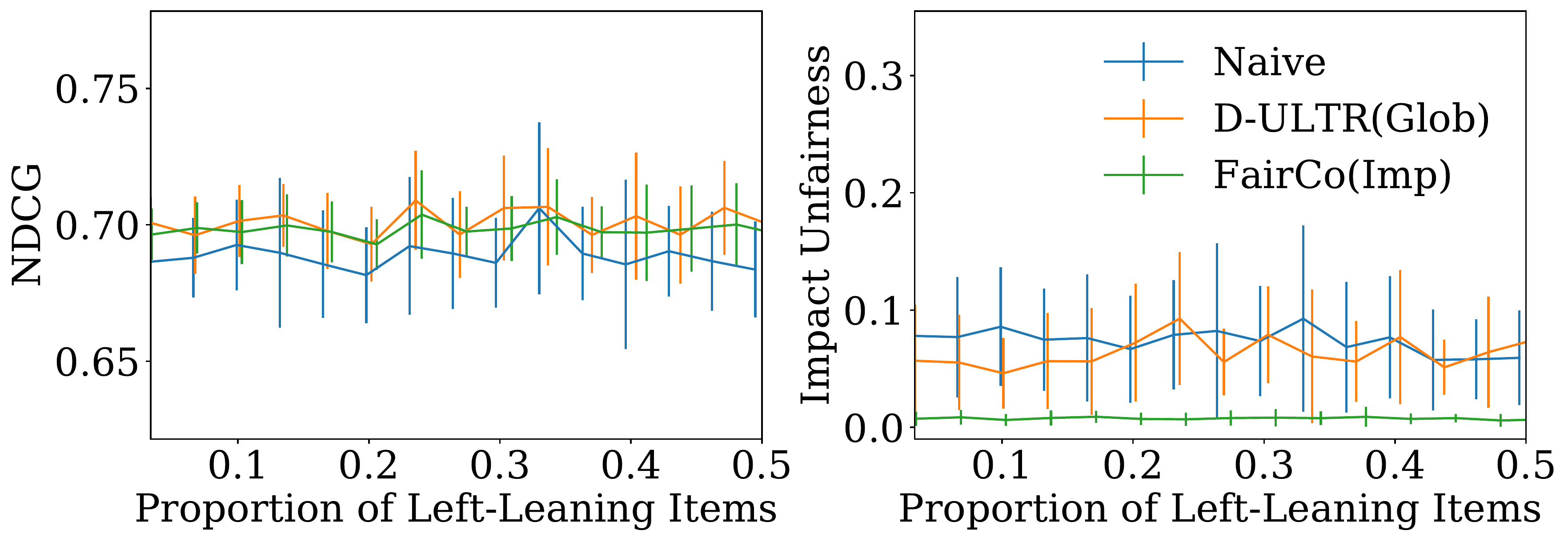}
    \caption{NDCG (left) and Unfairness (right) for varying proportion of $G_{\lft}$ (20 trials, 3000 users)}
    \label{fig:groupratio}
\end{figure}

In this experiment, we vary asymmetry of the polarity within the set of 30 news articles, ranging from $G_{\lft}=1$ to $G_{\lft}=15$ news articles. For each group size, we run 20 trials for 3000 users each. Figure \ref{fig:groupratio} shows that regardless of the group ratio, \conname\ reduces unfairness for the whole range while maintaining NDCG. This is in contrast to \naive~and \ultrglob, which suffer from high unfairness. 

\subsubsection{Is \conname\ effective for different user distributions?}

\begin{figure}[h!]
    \centering
    \includegraphics[width=\linewidth]{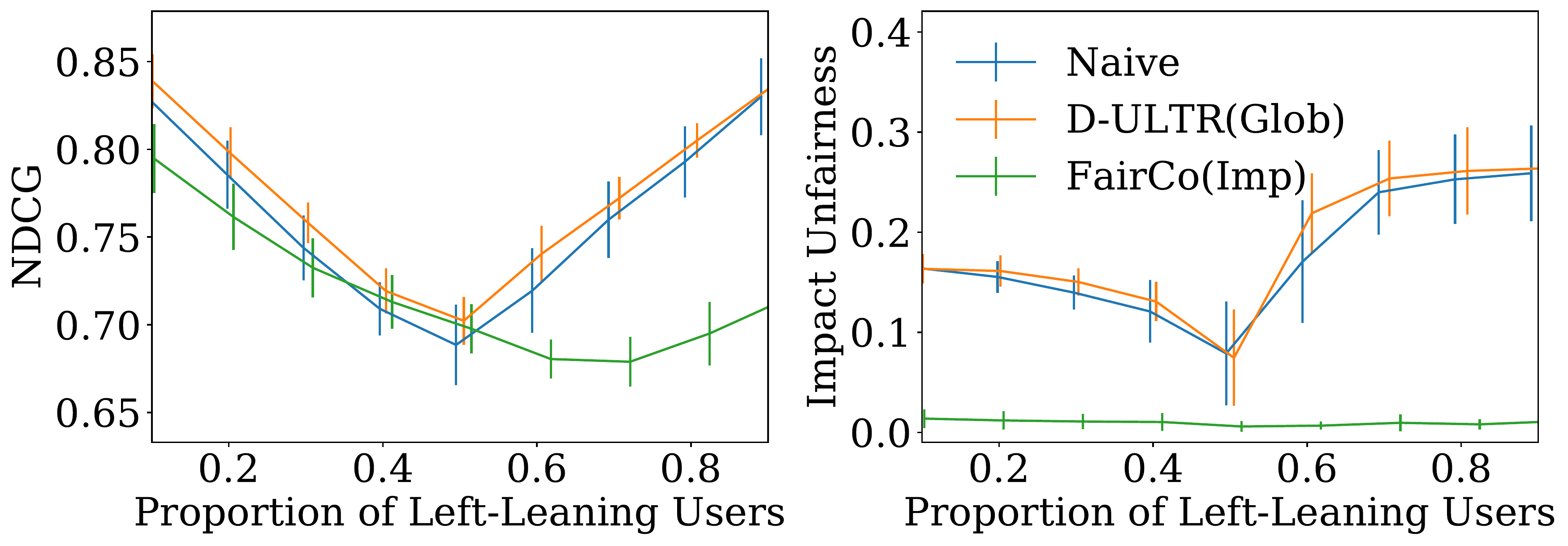}
    \caption{NDCG (left) and Unfairness (right) with varying user distributions. (20 trials, 3000 users)}
    \label{fig:userdistribution}
\end{figure}
Finally, to examine the robustness to varying user distributions, we control the polarity distribution of the users by varying $p_{neg}$ in Equation~\eqref{eq:polarity_user}. We run 20 trials each on 3000 users. In Figure~\ref{fig:userdistribution}, observe that \naive\ and \ultrglob\ suffer from high unfairness when there is a large imbalance between the minority and the majority group, while \conname\ is able to control the unfairness in all settings. %

\subsection{Evaluation on Real-World Preference Data} \label{sec:expjester}

To evaluate our method on a real-world preference data, we adopt the ML-20M dataset \cite{harper2015movielens}. 
We select the five production companies with the most movies in the dataset --- \textit{MGM, Warner Bros, Paramount, 20th Century Fox, Columbia}. These production companies form the groups for which we aim to ensure fairness of exposure. To exclude movies with only a few ratings and have a diverse user population, from the set of $300$ most rated movies by these production companies, we select $100$ movies with the highest standard deviation in the rating across users. For the users, we select $10^4$ users who have rated the most number of the chosen movies.
This leaves us with a partially filled ratings matrix with $10^4$ users and $100$ movies. To avoid missing data for the ease of evaluation, we use an off-the-shelf matrix factorization algorithm\footnote{Surprise library (http://surpriselib.com/) for SVD with \texttt{biased=False} and \texttt{D=50}} to fill in the missing entries. We then normalize the ratings to $[0,1]$ by apply a Sigmoid function centered at rating $b=3$ with slope $a=10$. These serve as relevance probabilities where higher star ratings correspond to a higher likelihood of positive feedback. Finally, for each trial we obtain a binary relevance matrix by drawing a Bernoulli sample for each user and movie pair with these probabilities. We use the user embeddings from the matrix factorization model as the user features $\x_t$. 

In the following experiments we use \conname\ to learn a sequence of ranking policies $\pol_t(\x)$ that are personalized based on $\x$. The goal is to maximize NDCG while providing fairness of exposure to the production companies. User interactions are simulated analogously to the previous experiments. At each time step $t$, we sample a user $\x_t$ and the ranking algorithm presents a ranking of the 100 movies. The user follows the position-based model from Equation~\eqref{eq:propensitycurve} and reveal $\click_t$ accordingly.

For the conditional relevance model $\RelevanceRegStar(\ypart|\x)$ used by \conname\ and \ultr, we use a one hidden-layer neural network that consists of $D=50$ input nodes fully connected to 64 nodes in the hidden layer with ReLU activation, which is connected to 100 output nodes with Sigmoid to output the predicted probability of relevance of each movie. Since training this network with less than 100 observations is unreliable, we use the global ranker \ultrglob\ for the first 100 users. We then train the network at $\tau=100$ users, and then update the network after every 10 users on all previously collected feedback i.e. $\click_1, ..., \click_{\tau}$ using the unbiased regression objective, $\L^{\click}(\w)$, from Eq.~\eqref{eq:newloss2} with the Adam optimizer \cite{kingma2014adam}.

\subsubsection{Does personalization via unbiased objective improve NDCG?}

\begin{figure}[h!]
    \vspace{-0.2cm}
    \centering
    \includegraphics[width=\linewidth]{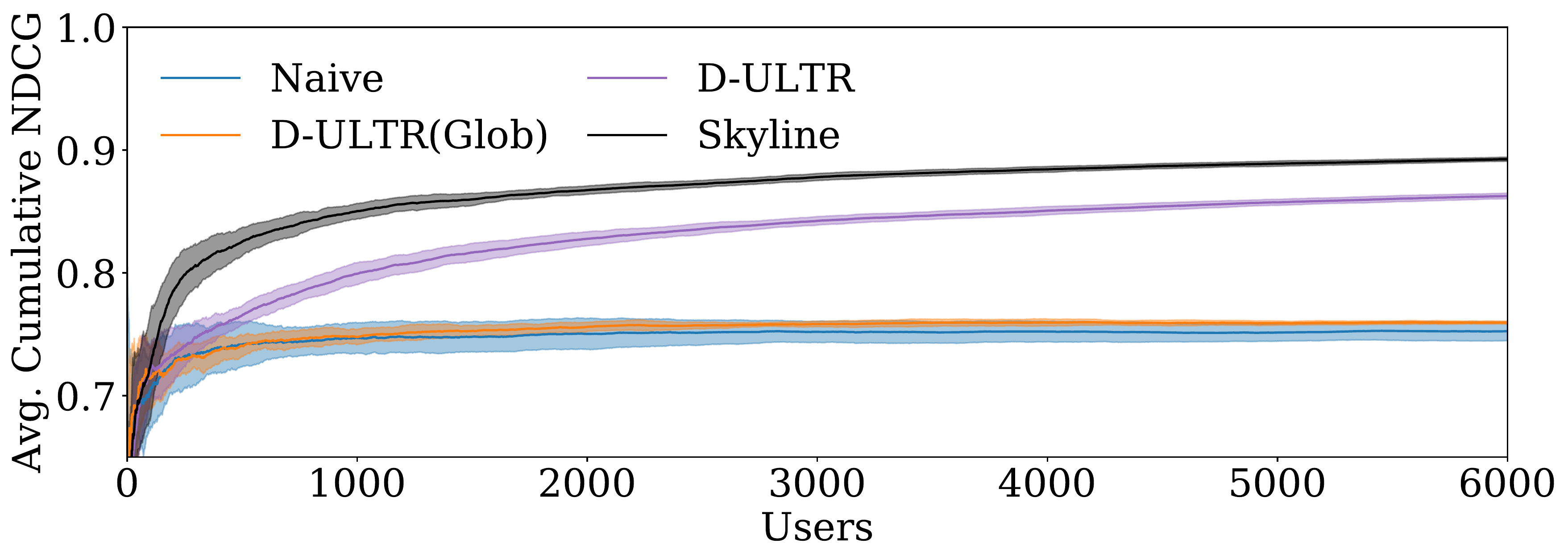}
    \caption{Comparing the NDCG of personalized and non-personalized rankings on the Movie data. (10 trials)}
    \vspace{-0.1cm}
    \label{fig:PersNDCG}
\end{figure}

We first evaluate whether training a personalized model using the de-biased $\RelevanceRegStar(\ypart|\x)$ regression estimator improves ranking performance over a non-personalized model. 
Figure \ref{fig:PersNDCG} shows that ranking by $\RelevanceRegStar(\ypart|\x)$ (i.e. \ultr) provides substantially higher NDCG than the unbiased global ranking \ultrglob\ and the \naive\ ranking. 
To get an upper bound on the performance of the personalization models, we also train a \skylineultr\ model using the (in practice unobserved) true relevances $\relx_t$ with the least-squares objective from Eq.~\eqref{eq:regrel}. 
Even though the unbiased regression estimator $\RelevanceRegStar(\ypart|\x)$ only has access to the partial feedback $\click_t$, it tracks the performance of \skylineultr. As predicted by the theory, they appear to converge asymptotically.

\subsubsection{Can \conname\ reduce unfairness?}
\begin{figure}[t!]
    \centering
    \includegraphics[width=\linewidth]{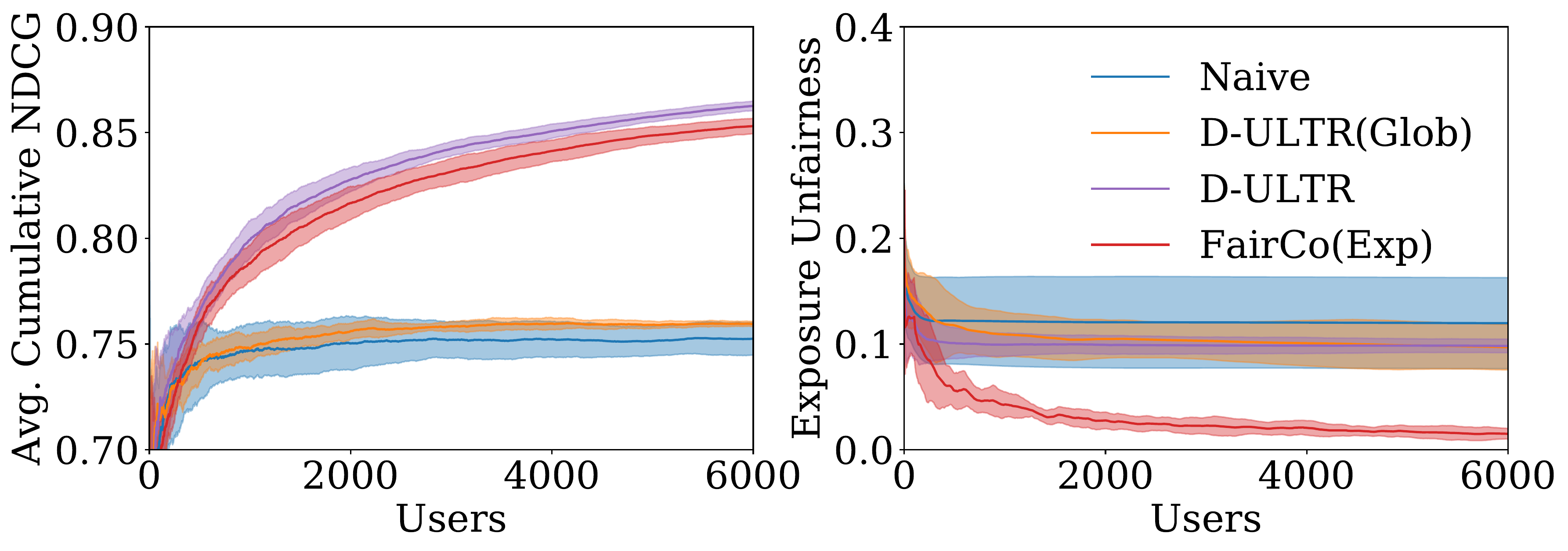}
    \caption{NDCG (left) and Exposure Unfairness (right) on the Movie data as the number of user interactions increases. (10 trials)}
    \label{fig:RealExposure}
\end{figure}
\begin{figure}
    \vspace{-0.15cm}
    \centering
    \includegraphics[width=\linewidth]{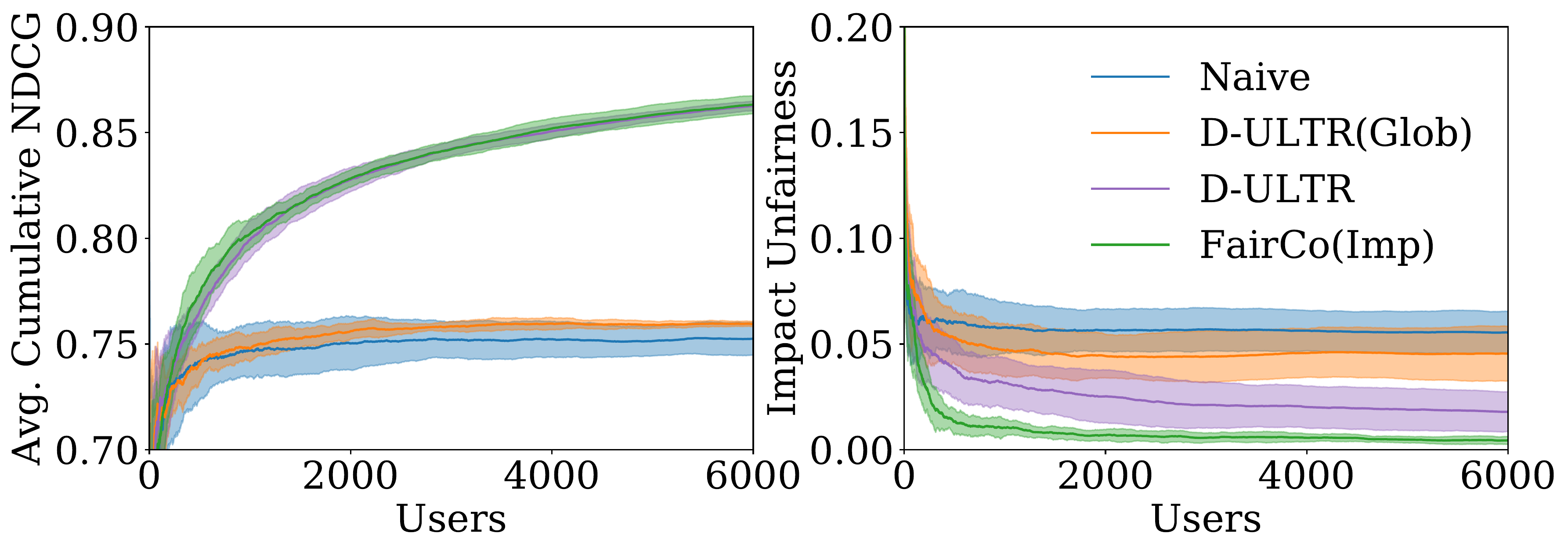}
    \caption{NDCG (left) and Impact Unfairness (right) on the Movie data as the number of user interactions increases. (10 trials)}
    \label{fig:RealImpact}
    \vspace{-0.1cm}
\end{figure}

Figure~\ref{fig:RealExposure} shows that \connameE\ can effectively control Exposure Unfairness, unlike the other methods that do not actively consider fairness. Similarly, Figure~\ref{fig:RealImpact} shows that \connameI\ is effective at controlling Impact Unfairness. As expected, the improvement in fairness comes at a reduction in NDCG, but this reduction is small. %

\subsubsection{How different are exposure and impact fairness?}

\begin{figure}[h]
\vspace{-0.15cm}
    \centering
    \includegraphics[width=\linewidth]{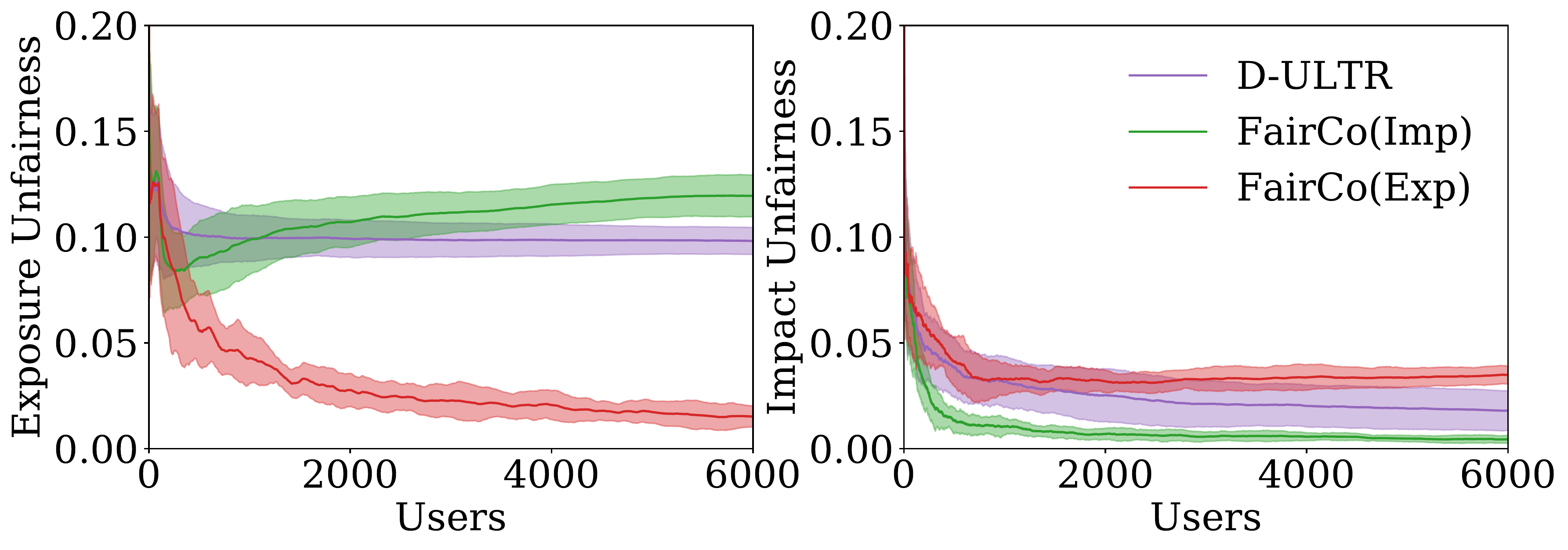}
    \caption{Unfairness of Exposure (left) and Unfairness of Impact (right) for the personalized controller optimized for either Exposure or Impact.  
     (10 trials)}
    \label{fig:ImpactvsExposure}
\vspace{-0.1cm}
\end{figure}

Figure \ref{fig:ImpactvsExposure} evaluates how an algorithm that optimizes Exposure Fairness performs in terms of Impact Fairness and vice versa. The plots show that the two criteria achieve different goals and that they are substantially different. In fact, optimizing for fairness in impact can even increase the unfairness in exposure, illustrating that the choice of criterion needs to be grounded in the requirements of the application.

\section{Conclusions}

We identify how biased feedback and uncontrolled exposure allocation can lead to unfairness and undesirable behavior in dynamic LTR. To address this problem, we propose \conname, which is able to adaptively enforce amortized merit-based fairness constraints even though their underlying relevances are still being learned. The algorithm is robust to presentation bias and thus does not exhibit rich-get-richer dynamics. Finally, \conname\ is easy to implement and computationally efficient, which makes it well suited for practical applications.

\begin{acks}
This research was supported in part by NSF Awards IIS-1901168 and a gift from Workday. All content represents the opinion of the authors, which is not necessarily shared or endorsed by their respective employers and/or sponsors.
\end{acks}

\bibliographystyle{ACM-Reference-Format}
\balance
\bibliography{refs,joachims}

\newpage

\appendix
\section{Linear Programming Baseline} \label{sec:fairlp} 

Here, we present a version of the fairness constraint defined in \citet{singh2018fairness} that explicitly computes an optimal ranking to present in each time step $\tau$ by solving a linear program (LP). In particular, we formulate an LP that explicitly maximizes the estimated DCG of the ranking $\y_\tau$ while minimizing the estimated cumulative fairness disparity $\DiI_{\tau}$ formulated in Equation \eqref{eq:impg}. This is used as a baseline to compare the P-Controller with.

To avoid an expensive search over the exponentially-sized space of rankings as in \cite{biega2018equity}, we exploit an alternative representation \cite{singh2018fairness} as a doubly-stochastic matrix $\pmat$ that is sufficient for representing $\y_t$. In this matrix, the entry $\pmat_{y,j}$ denotes the probability of placing item $y$ at position $j$. Both DCG as well as $\Impact_{\tau}(G_i)$ are linear functions of $\pmat$, which means that the optimum can be computed as the following linear program.
\begin{align}
\pmat^* &= \:\underset{\pmat,\xi \ge 0}{\argmax}\:\: \underbrace{\sum_{y} \RelevanceEst(y|\x) \sum_{j=1}^n  \frac{\pmat_{y,j}}{\log(1+j)}}_{\rm Utility} - \lambda \sum_{i,j} \xi_{ij} \nonumber \\ 
&\!\!\text{s.t. }\:\forall y,j: \sum_{i=1}^n \pmat_{y,i} = 1, \:\:\:  \sum_{y'} \pmat_{y',j} = 1, \:\:\:  0 \leq \pmat_{y,j} \leq 1 \nonumber \\
& \forall\; G_i,G_j\!:\left(\frac{\hat{\Impact}_\tau(G_i | \pmat_\tau)}{\MeritEst(G_i)}   -\frac{\hat{\Impact}_\tau(G_j| \pmat_\tau)}{\MeritEst(G_j)} \right) + D_{\tau-1}(G_i,G_j)\leq \xi_{ij} %
\end{align}
The parameter $\lambda$ controls trade-off between DCG of $\y_t$ and fairness. We explore this parameter empirically in Section~\ref{sec:expsim}.

It remains to define $\hat{\Impact}_\tau(G_j| \pmat_\tau)$. Assuming the PBM click model with $q(j)$ denoting the examination propensity of item $\ypart$ at position $j$, the estimated probability of a click is $\RelevanceEst(\ypart)\cdot q(j)$. So we can estimate the impact on the items in group $G$ for the rankings defined by $\pmat$ as
\[\hat{\Impact}_\tau(G|\pmat) = \frac{1}{|G|}\sum_{\ypart \in G}\RelevanceEst(\ypart|\x) \left( \sum_{j=1}^n \pmat_{y,j} \;q(j)\right)\]
We use the \texttt{scipy.optimize.linprog} LP solver to solve for the optimal $\pmat^*$, and then use a Birkhoff von-Neumann decomposition \cite{birkhoff1940lattice,singh2018fairness} to sample a deterministic ranking $\y_{\tau}$ to present to the user. This ranking is guaranteed to achieve the DCG and fairness optimized by $\pmat^*$ in expectation.

Note that the number of variables in the LP is $O(n^2 + |\mathcal{G}|^2)$, and even a polynomial-time LP solver incurs substantial computation cost when working with a large number of items in a practical dynamic ranking application.

\newpage 

\section{Convergence of \conname-Controller}
\label{sec:theoremproof}
In this section we will prove the convergence theorem of \conname~for exposure fairness. We conjecture that analogous proofs apply to other fairness criteria as well. 
To prove the main theorem, we will first set up the following lemmas.
\begin{lemma}
\label{lemma:1}
Under the conditions of the main theorem, for any value of $\lambda$ and any $\tau>\tau_0$: if $\DiE_{\tau-1}(G_i, G_j) > \frac{1}{(\tau-1) \lambda}$, then $$\tau \DiE_\tau(G_i, G_j) \leq (\tau-1)\DiE_{\tau-1}(G_i, G_j).$$ 
\end{lemma}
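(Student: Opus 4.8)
The plan is to reduce the claim to a one-step statement about the ranking produced at time $\tau$. Writing $\Exposure_{1:\tau}(G) := \sum_{t=1}^{\tau}\Exposure_t(G)$, the definition \eqref{eq:fairexposure} gives the telescoping identity
\[
  \tau\,\DiE_\tau(G_i,G_j) \;=\; (\tau-1)\,\DiE_{\tau-1}(G_i,G_j) \;+\; \left(\frac{\Exposure_\tau(G_i)}{\MeritEst(G_i)} - \frac{\Exposure_\tau(G_j)}{\MeritEst(G_j)}\right),
\]
so it suffices to show that the parenthesised single-step term is $\le 0$ whenever $\DiE_{\tau-1}(G_i,G_j) > \tfrac{1}{(\tau-1)\lambda}$; that is, at step $\tau$ the currently over-served group $G_i$ should receive no more exposure-per-merit than $G_j$.

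To get this, I would first compare the error boosts that items of $G_i$ and $G_j$ receive. For $\ypart \in G$ the \connameE~sorting key is $\RelevanceEst(\ypart|\x) + \lambda\,\err_\tau(\ypart)$ with $\err_\tau(\ypart) = (\tau-1)\max_{G_k}\DiE_{\tau-1}(G_k,G)$. Using the "potential-difference" identity $\DiE_{\tau-1}(G_k,G_j) = \DiE_{\tau-1}(G_k,G_i) + \DiE_{\tau-1}(G_i,G_j)$ --- immediate from \eqref{eq:fairexposure}, since each disparity is a difference of the per-merit quantities $\Exposure_{1:\tau-1}(G)/\MeritEst(G)$ --- and taking the maximum over $G_k$ (the last summand not depending on $G_k$) yields $\max_{G_k}\DiE_{\tau-1}(G_k,G_j) - \max_{G_k}\DiE_{\tau-1}(G_k,G_i) = \DiE_{\tau-1}(G_i,G_j)$. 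Hence for any $\ypart_i\in G_i$ and $\ypart_j\in G_j$ the two keys differ by at least $\lambda(\tau-1)\DiE_{\tau-1}(G_i,G_j) - 1 > \lambda\cdot\tfrac{1}{\lambda} - 1 = 0$, the $-1$ absorbing the worst-case relevance gap since $\RelevanceEst(\cdot|\x)\in[0,1]$. So every item of $G_j$ strictly outranks every item of $G_i$ in $\y_\tau$, independent of tie-breaking.

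Finally, since $G_j$ is ranked entirely above $G_i$ at time $\tau$, the non-overabundance condition \eqref{eq:non-overabundance} gives $\Exposure_\tau(G_j)/\MeritEst(G_j) \ge \Exposure_\tau(G_i)/\MeritEst(G_i)$, which makes the single-step term $\le 0$ and closes the argument via the telescoping identity. The main thing to be careful about is the $\max_{G_k}$ bookkeeping in $\err_\tau$ --- one must see that the maxima over the remaining groups cancel and leave exactly $\DiE_{\tau-1}(G_i,G_j)$ --- together with the observation that the threshold $\tfrac{1}{(\tau-1)\lambda}$ is calibrated precisely so that the $\lambda$-scaled boost gap (strictly above $1$) overpowers the at-most-$1$ relevance gap; everything else is the telescoping split and a direct application of \eqref{eq:non-overabundance}.
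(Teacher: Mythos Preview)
Your proposal is correct and follows the paper's approach: reduce via the telescoping identity to a single-step inequality, show that $G_j$ is ranked entirely above $G_i$ at time $\tau$, and invoke \eqref{eq:non-overabundance}. Your potential-difference identity $\max_{G_k}\DiE_{\tau-1}(G_k,G_j)-\max_{G_k}\DiE_{\tau-1}(G_k,G_i)=\DiE_{\tau-1}(G_i,G_j)$ is in fact slightly more careful than the paper's own argument, which only asserts that the correction for $G_j$ exceeds $1$ and hence ``dominates'' the relevance --- a step that as written is clean only when $\err_\tau$ vanishes on $G_i$, whereas your explicit computation of the boost gap handles the general multi-group case directly.
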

\begin{proof}
From the definition of $\DiE_\tau$ in Eq.~\eqref{eq:fairexposure} we know that for $\tau>\tau_0$, 
\[\tau \DiE_{\tau}(G_i, G_j) = (\tau-1)\DiE_{\tau-1}(G_i, G_j) + \left(\frac{\Exposure_{\tau}(G_i)}{\MeritEst(G_i)}-\frac{\Exposure_{\tau}(G_j)}{\MeritEst(G_j)}\right).\] 
Since $\DiE_{\tau-1}(G_i, G_j) > \frac{1}{(\tau-1) \lambda}$, we know that for all items in $G_j$ it holds that $\err_{\tau}(\ypart) > \frac{1}{\lambda}$. Hence, \conname~adds a correction term $\lambda \err_{\tau}(\ypart)$ to the $\RelevanceEst(\ypart)$ of all $d\in G_j$ that is greater than $\lambda\frac{1}{\lambda}=1$. Since $0 \leq \RelevanceEst(d)\leq 1$, the ranking is dominated by the correction term $\lambda\err_{\tau}(\ypart)$. This means that all $d\in G_j$ are ranked above all $d\in G_i$. Under the feasibility condition from Eq.\eqref{eq:non-overabundance}, this implies that $\left(\frac{\Exposure_{\tau}(G_i)}{\MeritEst(G_i)} \leq \frac{\Exposure_{\tau}(G_j)}{\MeritEst(G_j)}\right)$ and thus $\tau \DiE_{\tau}(G_i, G_j) \leq (\tau-1)\DiE_{\tau-1}(G_i, G_j)$.%
\end{proof}

\begin{lemma}
\label{lemma:2} Under the conditions of the main theorem, for any value of $\lambda>0$ there exists $\Delta \geq 0$ such that for any $G_i, G_j$ and $\tau > \tau_0$: if $\DiE_{\tau-1}(G_i, G_j) \leq \frac{1}{(\tau-1)\lambda}$, then $\tau \DiE_\tau(G_i, G_j) \leq \frac{1}{\lambda} + \Delta$. 
\end{lemma}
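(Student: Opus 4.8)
The plan is to reduce the claim to a one-step recursion together with a uniform bound on the per-step increment. Recall (as already used in the proof of Lemma~\ref{lemma:1}, and directly from the definition in Equation~\eqref{eq:fairexposure}) that for $\tau>\tau_0$
\[
\tau\,\DiE_\tau(G_i,G_j) \;=\; (\tau-1)\,\DiE_{\tau-1}(G_i,G_j) \;+\; \left(\frac{\Exposure_\tau(G_i)}{\MeritEst(G_i)}-\frac{\Exposure_\tau(G_j)}{\MeritEst(G_j)}\right).
\]
Under the hypothesis $\DiE_{\tau-1}(G_i,G_j)\le \frac{1}{(\tau-1)\lambda}$ the first summand is at most $\frac{1}{\lambda}$ (this holds whatever the sign of $\DiE_{\tau-1}$, since multiplying by the positive quantity $\tau-1$ preserves the inequality). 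So it remains to bound the increment term from above by a constant $\Delta$ that does not depend on $\tau$ or on the pair $(G_i,G_j)$.

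For the increment, first I would observe that for every group $G$ and every time step $t$ we have $\Exposure_t(G)=\frac{1}{|G|}\sum_{\ypart\in G}\prop_t(\ypart)\in[0,\prop_{\max}]$, using the assumption $0\le\prop_t(\ypart)\le\prop_{\max}$ from the theorem. Since the target merits $\MeritEst(G_k)$ are fixed in advance and strictly positive (recall they may be floored at a minimum constant precisely to avoid division by zero), the ratio $\Exposure_\tau(G)/\MeritEst(G)$ lies in $[0,\prop_{\max}/\MeritEst(G)]$. Dropping the non-positive term $-\Exposure_\tau(G_j)/\MeritEst(G_j)$ then gives
\[
\frac{\Exposure_\tau(G_i)}{\MeritEst(G_i)}-\frac{\Exposure_\tau(G_j)}{\MeritEst(G_j)} \;\le\; \frac{\prop_{\max}}{\MeritEst(G_i)} \;\le\; \max_{k}\frac{\prop_{\max}}{\MeritEst(G_k)}\;=:\;\Delta,
\]
which is exactly the constant promised in the statement. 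Combining the two bounds yields $\tau\,\DiE_\tau(G_i,G_j)\le \frac{1}{\lambda}+\Delta$.

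I do not expect a genuine obstacle here: the lemma is essentially the statement that a bounded quantity plus a uniformly bounded per-step increment remains bounded. The only point that deserves a line of care is checking that $\Delta$ can indeed be chosen independently of $\tau$ and of $(G_i,G_j)$, and this is immediate because the merits are fixed before the process starts and the per-step exposure of any group is capped by $\prop_{\max}$, so no $\tau$-dependent or pair-dependent quantity enters $\Delta$. This lemma will then be combined with Lemma~\ref{lemma:1} in the proof of the main theorem: Lemma~\ref{lemma:1} handles the steps where the disparity is large (it contracts), while Lemma~\ref{lemma:2} controls the disparity on the remaining steps, and together they force $(\tau-1)\DiE_{\tau-1}$ to stay within $\frac{1}{\lambda}+\Delta$, i.e.\ $\DiEbar_\tau=\mathcal{O}(1/\tau)$.
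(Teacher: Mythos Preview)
Your proposal is correct and follows essentially the same approach as the paper: use the one-step recursion, bound $(\tau-1)\DiE_{\tau-1}$ by $1/\lambda$ via the hypothesis, and cap the per-step increment by a $\tau$-independent constant $\Delta$. The only cosmetic difference is the choice of $\Delta$: the paper takes $\Delta=\max_\sigma\max_{G\neq G'}\bigl(\Exposure_\sigma(G)/\MeritEst(G)-\Exposure_\sigma(G')/\MeritEst(G')\bigr)$ over all rankings and pairs, whereas you use the coarser but more explicit $\Delta=\max_k \prop_{\max}/\MeritEst(G_k)$; both are valid, nonnegative, and independent of $\tau$ and of $(G_i,G_j)$.
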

\begin{proof}
Using the definition the definition of $\DiE_\tau$ in Eq.~\eqref{eq:fairexposure}, we know that
\begin{align*}
    \tau \DiE_{\tau}(G_i, G_j) &= (\tau-1) \DiE_{\tau-1}(G_i, G_j) +  \left(\frac{\Exposure_\tau(G_i)}{\MeritEst(G_i)}-\frac{\Exposure_\tau(G_j)}{\MeritEst(G_j)}\right)\\
        &\leq \frac{1}{\lambda} + \left(\frac{\Exposure_\tau(G_i)}{\MeritEst(G_i)}-\frac{\Exposure_\tau(G_j)}{\MeritEst(G_j)}\right)\\
        & \leq \frac{1}{\lambda} + \Delta
\end{align*}
where $\Delta = \max_\sigma \max_{\substack{G, G'\\G \neq G'}} \left(\frac{\Exposure_\sigma(G)}{\MeritEst(G)}-\frac{\Exposure_\sigma(G')}{\MeritEst(G')}\right)$. Note that $\Delta$ is a constant independent of $\tau$ and refers to the ranking $\sigma$ for which two groups $G, G'$ have the maximum exposure difference (e.g. one is placed at the top of the ranking, and the other is placed at the bottom).
\end{proof}

Using these two lemmas, we conclude the following theorem:

\begin{theorem}\label{theorem:appendix}
For any set of disjoint groups $\mathcal{G} = \{G_1,\dots,G_m\}$ with any fixed target merits $\MeritEst(G_i)>0$ that fulfill \eqref{eq:non-overabundance}, any relevance model $\RelevanceEst(\ypart|\x) \in [0,1]$, any exposure model $\prop_t(\ypart)$ with $0 \leq \prop_t(\ypart) \leq \prop_{\max}$, and any value $\lambda>0$, running \connameE~from time $\tau_0$ will always ensure that the overall disparity $\Dibar^E_\tau$ with respect to the target merits converges to zero at a rate of $\mathcal{O}\left(\frac{1}{\tau}\right)$, no matter how unfair the exposures $\frac{1}{\tau_0}\sum_{t=1}^{\tau_0} \Exposure_t(G_j)$ up to $\tau_0$ have been.
\end{theorem}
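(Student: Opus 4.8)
The plan is to reduce the claim about the averaged disparity $\DiEbar_\tau$ to a uniform, $\tau$-independent bound on the \emph{un-normalized} cumulative disparity of every ordered pair, $\tau \DiE_\tau(G_i,G_j) = \sum_{t=1}^{\tau}\left(\frac{\Exposure_t(G_i)}{\MeritEst(G_i)} - \frac{\Exposure_t(G_j)}{\MeritEst(G_j)}\right)$, and then to prove that bound by induction on $\tau \geq \tau_0$ using Lemmas~\ref{lemma:1} and~\ref{lemma:2}. First I would record the constant $\Delta = \max_\sigma \max_{G \neq G'}\left(\frac{\Exposure_\sigma(G)}{\MeritEst(G)} - \frac{\Exposure_\sigma(G')}{\MeritEst(G')}\right)$ from Lemma~\ref{lemma:2} and observe it is finite, since there are finitely many rankings $\sigma$, every $\Exposure_\sigma(G) \le \prop_{\max}$, and every $\MeritEst(G) > 0$. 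Then I would set $C := \max\left\{\tfrac{1}{\lambda} + \Delta,\ \tau_0\Delta\right\}$; the second term absorbs however unfair the first $\tau_0$ steps were, because $\tau_0 \DiE_{\tau_0}(G_i,G_j) \le \tau_0 \Delta$.

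The induction hypothesis is that $\tau \DiE_\tau(G_i,G_j) \le C$ holds for every ordered pair $(G_i,G_j)$; the base case $\tau = \tau_0$ is immediate from the definition of $C$. For the step from $\tau$ to $\tau+1$ with a pair fixed, I would use the exhaustive dichotomy that already separates the two lemmas. If $\DiE_\tau(G_i,G_j) > \frac{1}{\tau\lambda}$, then Lemma~\ref{lemma:1}, applied with its running index set to $\tau+1$, gives $(\tau+1)\DiE_{\tau+1}(G_i,G_j) \le \tau \DiE_\tau(G_i,G_j)$, which is $\le C$ by the hypothesis. Otherwise $\DiE_\tau(G_i,G_j) \le \frac{1}{\tau\lambda}$, and Lemma~\ref{lemma:2} gives $(\tau+1)\DiE_{\tau+1}(G_i,G_j) \le \frac{1}{\lambda} + \Delta \le C$ outright. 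Either way the hypothesis is preserved, so $\tau \DiE_\tau(G_i,G_j) \le C$ for all $\tau \ge \tau_0$ and all ordered pairs.

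To conclude, dividing by $\tau$ gives $\DiE_\tau(G_i,G_j) \le C/\tau$ for every ordered pair; applying this to $(G_j,G_i)$ as well and using the antisymmetry $\DiE_\tau(G_j,G_i) = -\DiE_\tau(G_i,G_j)$ yields $|\DiE_\tau(G_i,G_j)| \le C/\tau$. Summing, $\DiEbar_\tau = \frac{2}{m(m-1)}\sum_{i<j}|\DiE_\tau(G_i,G_j)| \le C/\tau = \mathcal{O}(1/\tau)$, with $C$ independent of $\tau$, which is exactly the claimed rate, regardless of the exposures accumulated before $\tau_0$.

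The genuinely substantive step is already packaged inside Lemma~\ref{lemma:1}: once a pair's disparity exceeds the $\tfrac{1}{(\tau-1)\lambda}$ threshold, the controller's correction $\lambda\,\err_\tau(\cdot)$ exceeds $1$ on the underexposed group and therefore dominates the relevance scores (which lie in $[0,1]$), forcing that group entirely above the overexposed one, and then the non-overabundance condition~\eqref{eq:non-overabundance} guarantees the next step cannot increase the cumulative disparity. Given the lemmas, the remaining obstacle is essentially bookkeeping: verifying that the case split is exhaustive, that a single constant $C$ serves uniformly across all $\binom{m}{2}$ pairs and all $\tau \ge \tau_0$, and --- the one point that genuinely needs attention --- noticing that Lemma~\ref{lemma:2} is a one-shot bound (it controls $(\tau+1)\DiE_{\tau+1}$ absolutely, not recursively in terms of $\tau\DiE_\tau$), so it cannot compound on its own and must be interleaved with Lemma~\ref{lemma:1} for the induction to close.
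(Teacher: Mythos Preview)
Your proof is correct and follows essentially the same route as the paper: bound the un-normalized cumulative disparity $\tau\DiE_\tau(G_i,G_j)$ uniformly in $\tau$ by induction, splitting the inductive step into the two cases covered by Lemmas~\ref{lemma:1} and~\ref{lemma:2}, then divide by $\tau$ and average over pairs. The only cosmetic differences are that you use a single uniform constant $C=\max\{\tfrac{1}{\lambda}+\Delta,\ \tau_0\Delta\}$ across all pairs (the paper keeps the pair-dependent constant $\max\{\tau_0|\DiE_{\tau_0}(G_i,G_j)|,\ \tfrac{1}{\lambda}+\Delta\}$), and you make explicit the antisymmetry step needed to pass from a one-sided bound on $\DiE_\tau(G_i,G_j)$ to a bound on $|\DiE_\tau(G_i,G_j)|$, which the paper leaves implicit.
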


\begin{proof}
To prove that $\DiEbar_\tau$ converges to zero at a rate of $\mathcal{O}\left(\frac{1}{\tau}\right)$, we will show that for all $\tau \geq \tau_0$, the following holds:
\[\DiEbar_\tau \leq \frac{1}{\tau}\frac{2}{m(m-1)}\sum_{i=1}^m\sum_{j=i+1}^m \max\left(\tau_0\left|\DiE_{\tau_0}(G_i, G_j)\right|, \frac{1}{\lambda} + \Delta\right)\]

The two terms in the max provide an upper bound on the disparity at time $\tau$ for any $G_i$ and $G_j$. To show this, we prove by induction that $\tau \DiE_{\tau}(G_i,G_j) \le \max\left(\tau_0\left|\DiE_{\tau_0}(G_i, G_j)\right|, \frac{1}{\lambda} + \Delta\right)$ for all $\tau \ge \tau_0$. At the start of the induction at $\tau=\tau_0$, the max directly upper bounds $\tau_0 \DiE_{\tau_0}(G_i,G_j)$. In the induction step from $\tau-1$ to $\tau$, if $(\tau-1)\DiE_{\tau-1}(G_i,G_j) > \frac{1}{\lambda}$, then Lemma~\ref{lemma:1} implies that $\tau\DiE_{\tau}(G_i,G_j) \le (\tau-1)\DiE_{\tau-1}(G_i,G_j) \le \max\left(\tau_0\left|\DiE_{\tau_0}(G_i, G_j)\right|, \frac{1}{\lambda} + \Delta\right)$.\\ If $(\tau-1)\DiE_{\tau-1}(G_i,G_j) \le \frac{1}{\lambda}$, then Lemma~\ref{lemma:2} implies that $\tau\DiE_{\tau}(G_i,G_j) \le \frac{1}{\lambda} + \Delta \le \max\left(\tau_0\left|\DiE_{\tau_0}(G_i, G_j)\right|, \frac{1}{\lambda} + \Delta\right)$ as well. This completes the induction, and we conclude that 
\[\DiE_{\tau}(G_i, G_j) \leq \frac{1}{\tau}\max\left(\tau_0|\DiE_{\tau_0}(G_i, G_j)|, \frac{1}{\lambda} + \Delta\right).\] Putting everything together, we get

\begin{align*}
    \DiEbar_{\tau} &= %
                     \frac{2}{m(m-1)} \sum_{i=0}^{m}\sum_{j=i+1}^m \left|\DiE_\tau(G_i, G_j)\right|\\
                    &\leq \frac{2}{m(m-1)} \sum_{i=0}^{m}\sum_{j=i+1}^m \left|\frac{1}{\tau}\max\left(\tau_0|\DiE_{\tau_0}(G_i, G_j)|, \frac{1}{\lambda} + \Delta\right)\right|\\
                    & \leq \frac{1}{\tau}\frac{2}{m(m-1)} \sum_{i=0}^{m}\sum_{j=i+1}^m \max\left(\tau_0\left|\DiE_{\tau_0}(G_i, G_j)\right|,\frac{1}{\lambda} + \Delta \right) \tag{since $\lambda, \Delta, \tau>0$}
\end{align*}
\end{proof}

\end{document}